%% file: main.tex
\def\draft{0}

\documentclass[a4paper,UKenglish,cleveref, autoref, thm-restate, cleveref]{lipics-v2021}

\pdfoutput=1 
\hideLIPIcs  

\nolinenumbers 

\usepackage{amsmath}
\usepackage{amsthm}
\usepackage{amsfonts}
\usepackage{amssymb}
\usepackage{mdframed}
\usepackage{chemmacros}
\usepackage{xcolor}
\usepackage{soul}
\usepackage{svg} 

\usepackage{dsfont}

\usepackage{upgreek}
\usepackage{chemmacros}

\chemsetup{
  formula = mhchem,
  reactions/tag-open=(R,
  reactions/tag-close=),
}

\usepackage[textsize=tiny,textwidth=0.9\marginparwidth]{todonotes} 

\usepackage{booktabs}
\usepackage{apxproof}

\usepackage{etoolbox}
\BeforeBeginEnvironment{thebibliography}{\par\begingroup\nolinenumbers}
\AfterEndEnvironment{thebibliography}{\endgroup}

\input{macro}

\usepackage{tikz}
\usetikzlibrary{shapes.geometric,shapes.misc,arrows.meta}

\title{Amplification at Equilibrium: Structural and Thermodynamic Limitations, and Implementation} 

\titlerunning{Amplification at Equilibrium: Limitations and Implementation} 

\author{Hamidreza Akef}{Department of Electrical and Computer Engineering, The University of Texas at Austin, TX, USA}{hakef@utexas.edu}
{https://orcid.org/0009-0001-9253-9737}
{}
\author{Chia-Yu Sung}{Department of Molecular Biosciences, The University of Texas at Austin, TX, USA}
{chiayusung@utexas.edu}
{https://orcid.org/0009-0001-7547-7159}
{}

\author{Aneesh Vanguri}{Department of Molecular Biosciences, The University of Texas at Austin, TX, USA}
{aneeshv@utexas.edu}
{}
{}

\author{David Soloveichik}{Department of Electrical and Computer Engineering, The University of Texas at Austin, TX, USA \and \url{https://www.solo-group.link/}}{david.soloveichik@utexas.edu}{https://orcid.org/0000-0002-2585-4120}{}

\authorrunning{H. Akef, C.Y. Sung, A. Vanguri, and D. Soloveichik} 

\Copyright{Hamidreza Akef, Chia-Yu Sung, Aneesh Vanguri, and David Soloveichik} 

\ccsdesc{Applied computing~Systems biology}
\ccsdesc{Computer systems organization~Molecular computing}
\ccsdesc{Mathematics of computing~Mathematical analysis} 

\keywords{Equilibrium amplification, Dimerization networks, Strand commutation, Thermodynamics of amplification}

\category{} 

\funding{Schmidt Sciences Polymath Award to D.S., DOE grant DE-SC0024248, NSF SemiSynBio III: GOALI grant 2227578}

\acknowledgements{We thank Antoni Szeglowski for performing experiments and Marko Vasi\'c for sequence design on an earlier variation of the trimeric equilibrium amplifier.}

\EventEditors{Dominic Scalise and Robert Schweller}
\EventNoEds{2}
\EventLongTitle{32nd International Conference on DNA Computing and Molecular Programming (DNA 32)}
\EventShortTitle{DNA 32}
\EventAcronym{DNA}
\EventYear{32}
\EventDate{August 3--7, 2026}
\EventLocation{Fayetteville, Arkansas, US}
\EventLogo{}
\SeriesVolume{387}
\ArticleNo{}

\begin{document}

\maketitle

\begin{abstract}
Amplifying weak molecular signals is essential in both natural and engineered biochemical systems. While most amplification schemes operate out of equilibrium, relying on kinetic barriers and fuel-driven cascades, it is also possible to amplify at thermodynamic equilibrium by shifting the energy landscape upon addition of an analyte. Equilibrium amplification is appealing because, in principle, the system can remain indefinitely in the untriggered state. In this work, we establish fundamental structural and thermodynamic limits on equilibrium-based amplification. We first prove that dimerization networks---systems restricted to complexes of at most two monomers---are inherently incapable of equilibrium amplification. 
This no-go theorem explains the absence of amplification in prior undercomplementary ``strand commutation'' designs. 
We then show that allowing trimeric complexes breaks this barrier. 
We propose an isometric trimer-based equilibrium amplifier whose output preserves the size of the input, enabling modular composition, and validate it experimentally, achieving an amplification factor close to the expected $2\times$. 
Finally, we derive universal thermodynamic bounds applicable regardless of complex size: 
the maximum amplification factor scales linearly with the free energy of interaction between the analyte and the amplifier components. 
For nucleic acid systems, this implies that the analyte length must grow linearly with the desired amplification factor, and that composing modular amplifiers yields diminishing returns for a fixed analyte. 
Together, these results delineate the structural and energetic boundaries of equilibrium amplification and rigorously justify the necessity of out-of-equilibrium approaches for achieving high gain.
\end{abstract}

\section{Introduction}

In both natural and engineered systems, amplifying weak molecular signals is crucial for reliable function. For example, signal transduction cascades employ multiple mechanisms---including second messengers, protein phosphorylation, and enzyme cascades---to convert low-level extracellular inputs into robust intracellular responses~\cite{Su2024}. Similarly, regulatory RNAs act as ultrasensitive switches to tightly govern gene expression~\cite{Levine2007}. This broad capacity for signal amplification is essential for maintaining cellular specificity and responsiveness across diverse biological contexts, from hormone signaling to receptor-mediated pathways that drive cellular differentiation and metabolism~\cite{Chen2024,Liu2026}.

Parallel to these natural architectures, amplification in molecular diagnostics and synthetic biology enables the detection of scarce biomarkers and the construction of complex biochemical circuits. Enzyme-free DNA circuits have demonstrated highly sophisticated functions: Seelig~\emph{et al.} implemented modular logic gates with signal restoration and amplification~\cite{seelig2006science}, and Qian and Winfree scaled strand-displacement cascades into multi-layer logic devices~\cite{Lulu2011Science}. Likewise, dynamic amplification schemes such as the hybridization chain reaction and catalytic hairpin assembly rely on cascading strand displacements to robustly boost signal levels~\cite{Li2011}.

Chemical amplifiers typically rely on bridging an energy barrier in the presence of the analyte and thus necessarily operate out of equilibrium.
For example, most DNA‐based amplifiers rely on fuel-driven kinetic cascades. 
In the hybridization chain reaction, two DNA hairpins remain kinetically trapped until an initiator strand opens one, setting off a chain of strand-displacement reactions~\cite{Dirks-Pierce2004}.
Likewise, catalytic hairpin assembly uses a series of metastable hairpin fuels: analyte binding to one hairpin exposes a toehold that catalytically opens the next, releasing the analyte to repeat the cycle and yielding hundreds-fold signal amplification~\cite{Jiang2013}.

Nonetheless, it is possible to perform signal amplification at thermodynamic equilibrium.
In such systems, prior to the addition of the analyte, the thermodynamic equilibrium has a relatively small amount of output (detection species).
The addition of the analyte shifts the energy landscape, and subsequent relaxation to the new equilibrium increases the output concentration by more than the amount of analyte added.
An important potential advantage of equilibrium amplifiers is that they do not rely on a kinetic barrier, and thus can indefinitely remain in the untriggered state.
Equilibrium amplifiers also necessarily avoid complex enzymes typical of natural signaling cascades
since the only role an enzyme can play is to modify a kinetic barrier.
Equilibrium amplifiers also support dynamic signal processing: changing the input (e.g., swapping input strands) causes the system to re-equilibrate accordingly~\cite{sterin2025thermodynamically}, whereas kinetically controlled systems are generally single-use, requiring their metastable complexes to be externally regenerated for each new input.
A family of equilibrium amplifiers was recently proposed in the Thermodynamic Binding Network (TBN) model~\cite{Doty2017,petrack2023}.

Equilibrium-based chemical computation recently received a significant boost with the work of Nikitin on ``strand commutation'' networks, who showed that diverse computations can be performed by weak hybridization of DNA strands~\cite{nikitin2023noncomplementary}.
Specifically, the inputs are DNA strands and the computation arises via the competition of multiple possible binding partners between pairs of strands.
The thermodynamic equilibrium of the input together with the rest of the system then contains the desired amount of output strands.
Undercomplementarity appears to allow significantly more compact constructions. 
For example, Nikitin contrasts prior work on the square root circuit using 130 strands~\cite{Lulu2011Science} with his undercomplementary implementation using only nine strands (plus four inputs).
Nonetheless, the computations Nikitin demonstrated suffer from significant signal attenuation, especially in the more complex constructions, which exhibit output concentrations many orders of magnitude lower than the input concentration (e.g.~analog computation). 
In none of the proposed systems in~\cite{nikitin2023noncomplementary} is the amount of output greater than the amount of input,
suggesting that some aspect of Nikitin's paradigm prevents amplification.

In the current paper, we study amplification at thermodynamic equilibrium, using both theoretical arguments and experimental realization. 
Our theoretical results are general and apply to DNA-based systems as well as other substrates.
In \cref{sec:dimerization}, we consider the natural classification of equilibrium systems according to maximum complex size, and show that dimerization networks (maximum complex size two) cannot amplify.
This result explains the lack of amplification in Nikitin's work despite the otherwise extensive functional diversity of dimerization networks.

In contrast to dimerization networks, allowing trimers permits equilibrium amplification. 
In \cref{sec:trimerization}, we develop a novel undercomplementary trimeric $2\times$ DNA amplifier,
and demonstrate that it is able to achieve close to ideal amplification in wet-lab experiments.
In contrast to the previously proposed equilibrium amplifier in the TBN model~\cite{petrack2023}, our undercomplementary implementation is substantially more compact in terms of the number of strands overall, the size of the largest complexes, and the number of logical domains. 
Similar to the TBN amplifier, our amplifier is ``composable'':
the output strand has the same form as the input strand, and could act as an input to a downstream amplifier of the same design.
Composability suggests a method to achieve overall increased amplification. 

In the final part of this paper (\cref{sec:thermo_bounds}), rather than focusing on structural constraints that limit equilibrium amplification, we derive bounds on the amplification factor based on the free energy of interaction between the analyte and the rest of the system. 
In the context of nucleic acid systems, our bound implies that the length of the detected analyte must scale linearly with the desired equilibrium amplification factor.
These results formalize the intuition that a larger difference between the triggered and untriggered configurations---necessary for a greater degree of amplification---necessarily incurs an energetic cost,
and we show that this cost scales linearly with the amplification factor.
Since the analyte is the only strand added to the system, and all other strands remain unchanged, with only the complexes they form being altered, the requisite energy must originate from the direct interaction of the analyte with the other amplifier strands.
For modular equilibrium amplifiers such as our trimeric system and the TBN amplifier~\cite{petrack2023}, the impossibility result implies that the composition of more amplifier modules yields diminishing returns for a fixed analyte.

Thermodynamic limits on biochemical response have been studied most extensively for nonequilibrium sensitivity and ultrasensitivity. For example, Qian showed that the free energy from ATP/GTP hydrolysis controls the achievable ultrasensitivity in phosphorylation-dephosphorylation switches~\cite{Qian2003ThermoKinetic}, and subsequent work derived universal thermodynamic response bounds for nonequilibrium steady states and tight energy bounds for covalent-modification switches~\cite{OwenGingrichHorowitz2020,OwenTallaBiddleGunawardena2023}. A related line of work bounds Hill-type sensitivity by structural quantities such as the number of binding sites or the support of the perturbation~\cite{OwenHorowitz2021SizeLimits}. These results are complementary to the question addressed here, but do not directly address equilibrium amplification as we define it.

Our results establish fundamental principles of equilibrium-based amplification based on structural network topology (i.e., dimerization versus trimerization), as well as thermodynamic constraints.
Our trimeric equilibrium amplifier could potentially be the missing piece for Nikitin-style undercomplementary schemes, allowing computation with less signal loss than is possible in dimerization-only equilibrium networks.
More generally, however, our thermodynamic impossibility results rigorously justify out-of-equilibrium amplifiers as unavoidable for greater amplification in the next-generation ultrasensitive molecular sensors and signal processing architectures.

\section{Structural Limitations: Dimerization Networks Cannot Amplify}
\label{sec:dimerization}

In this section, we analyze the amplification capacity of a chemical reaction network limited to dimerization reactions (i.e., complexes of at most two monomers). 
By amplification, we mean that an increase in the input monomer leads to a larger-than-proportional increase in the concentration of some monomer species or dimer species in the network. 
We rigorously demonstrate that dimerization networks are structurally incapable of signal amplification, regardless of the thermodynamic parameters. 
Specifically, we prove that the sensitivity of any monomer or dimer species in the network to an input signal is bounded by unity.

While previous work~\cite{Maslov2007} used numerical simulations to argue that dimerization networks cannot amplify, rigorous systems-level understanding of these architectures remains elusive~\cite{Parres-Gold2025}.
This gap persists even as researchers increasingly push the boundaries of these architectures, such as utilizing competitive dimerization networks as analog physical computers trained via directed evolution~\cite{Tkachenko2025}.

\subsection{Model Definition and Equilibrium}

Consider a system comprising $n$ distinct monomeric species $\{X_1, \dots, X_n\}$. These monomers can interact to form dimeric species $D_{ij}$ according to the reversible reactions:
\begin{equation*}
    X_i + X_j \rightleftharpoons D_{ij} \quad \text{for } 1 \le i \le j \le n,
\end{equation*}
with the convention $D_{ji}=D_{ij}$. The system is assumed to be in thermodynamic equilibrium. The equilibrium concentrations, denoted by lowercase variables $x_i$ and $d_{ij}$, satisfy the mass-action law:
\begin{equation}\label{eq:eq_conc}
    d_{ij} = K_{ij} \, x_i \, x_j,
\end{equation}
where $K_{ij} \ge 0$ is the equilibrium association constant. The case $K_{ij}=0$ represents an absent dimer, while $K_{ij}>0$ corresponds to an allowed dimerization reaction with finite formation free energy.

We define the total concentration of monomer $X_i$, denoted by $C_i$, as the sum of the free monomer concentration and the concentration of $X_i$ sequestered in dimers. According to the conservation of mass:
\begin{equation}\label{eq:mass_conservation}
    C_i = x_i + \sum_{j \neq i} d_{ij} + 2 d_{ii}.
\end{equation}

\subsection{Sensitivity Analysis}

We examine the system's equilibrium response to adding total concentration $\sigma$ of the input monomer $X_1$. 
Specifically, we aim to derive an equation for the relative sensitivity of each monomer species, defined as the derivative of its equilibrium concentration with respect to the added input amount, normalized by its concentration.
Formally, we say the \emph{sensitivity} of species $X_i$ is $x_i' = \frac{d x_i}{d \sigma}$, and its \emph{relative sensitivity} is $\frac{x_i'}{x_i}$.

For added input amount $\sigma$, the total concentration of monomer $X_i$ is $C_i = C_i^{\text{init}} + \sigma \delta_{1i}$, where $\delta_{1i}$ is the Kronecker delta.
Substituting \cref{eq:eq_conc} into \cref{eq:mass_conservation}, we obtain the closed algebraic mass-balance equations:
\begin{equation*}
    x_i + \sum_{j \neq i} K_{ij} \, x_i \, x_j + 2 K_{ii} \, x_i^2 = C_i^{\text{init}} + \sigma \delta_{1i}.
\end{equation*}
To determine the sensitivity of the system, we differentiate with respect to the signal $\sigma$ to obtain:
\begin{equation*}
    x_i' + \sum_{j \neq i} K_{ij} ( x_i' \, x_j + x_i \, x_j') + 4 K_{ii} \, x_i \, x_i' = \delta_{1i}.
\end{equation*}
Using the equilibrium relationship $K_{ij} x_i x_j = d_{ij}$, we can rewrite $K_{ij} x_j = d_{ij}/x_i$ and $K_{ij} x_i = d_{ij}/x_j$ to relate the relative sensitivities to concentrations of dimeric species:
\begin{equation}\label{eq:mass-cons-der}
    \frac{x_i'}{x_i} \bigg( x_i + \sum_{j \neq i} d_{ij} + 4 d_{ii} \bigg) + \sum_{j \neq i} d_{ij} \frac{x_j'}{x_j} = \delta_{1i}.
\end{equation}

\subsection{The Upper Bound on Amplification}

The following theorem shows that no monomer species in the dimerization network can exhibit a relative sensitivity magnitude greater than that of the input monomer, and that the input monomer species itself cannot be amplified beyond the added signal.
The theorem is proven in \cref{proof:input-domination}.

\begin{theorem}[The Input Domination Theorem]
\label{thm:input-domination}
    Let $\sigma$ represent the total added concentration of input monomer $X_1$.
    In a dimerization network at equilibrium, the sensitivity of the free concentration $x_1$ to changes in the total input is bounded by 1, i.e., $0 < \frac{dx_1}{d\sigma} \le 1$. Furthermore, for any monomer species $X_j$, the relative sensitivity satisfies $|\frac{x_j'}{x_j}| \le |\frac{x_1'}{x_1}|$.
\end{theorem}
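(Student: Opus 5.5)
Write $r_i = x_i'/x_i$ for the relative sensitivities and view \cref{eq:mass-cons-der} as a linear system $M r = e_1$. Its matrix has entries
\[
M_{ii} = x_i + \sum_{j\neq i} d_{ij} + 4d_{ii}, \qquad M_{ij} = d_{ij}\ (i\neq j).
\]
This $M$ is symmetric with nonnegative entries and strictly diagonally dominant, since $M_{ii} - \sum_{j\neq i} M_{ij} = x_i + 4d_{ii} > 0$ whenever $x_i>0$, which holds for any positive total concentration. So $M$ is positive definite and invertible, and the $r_i$ are well defined; by the implicit function theorem the $x_i$ are differentiable in $\sigma$.

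\textbf{Maximum-modulus argument.} The plan is to prove $|r_j|\le |r_1|$ by a maximum principle. Pick an index $k$ maximizing $|r_k|$ and suppose $k\neq 1$, so row $k$ reads $M_{kk} r_k + \sum_{j\neq k} d_{kj} r_j = 0$. Then
\[
M_{kk}|r_k| = \Bigl|\sum_{j\neq k} d_{kj} r_j\Bigr| \le \Bigl(\sum_{j\neq k} d_{kj}\Bigr)|r_k|,
\]
which gives $(x_k+4d_{kk})|r_k|\le 0$, hence $r_k=0$. So if the maximum of $|r|$ is attained away from index $1$, it is zero; in either case the maximum equals $|r_1|$. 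This proves the second claim. It also shows $r_1\neq 0$, since otherwise $r\equiv 0$, contradicting row $1$, whose right-hand side is $1$.

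\textbf{Sign and bound for $x_1'$.} For $0<x_1'$, use positive definiteness: $r_1 = (M^{-1})_{11} > 0$. Equivalently, multiply the system by $r$ to get $r^\top M r = r_1$ and note the left side is positive. For the upper bound, row $1$ gives
\[
1 = M_{11} r_1 + \sum_{j\neq 1} d_{1j} r_j \ge M_{11}r_1 - \Bigl(\sum_{j\neq 1} d_{1j}\Bigr) r_1 = (x_1+4d_{11}) r_1,
\]
using $|r_j|\le r_1$. Hence $x_1' = x_1 r_1 \le x_1/(x_1+4d_{11}) \le 1$.

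The main obstacle is bookkeeping the self-dimer term $4d_{ii}$, where the coefficient $4$ rather than $2$ comes from differentiating $x_i^2$. The strictness of the diagonal dominance relies on that term together with $x_i>0$, which should be stated explicitly. I would also remark that species with $C_i=0$ can be discarded, so that all $x_i>0$ and the divisions defining $r_i$ are legitimate.
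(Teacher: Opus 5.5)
Your proof is correct and is essentially the paper's argument: take the index $k$ maximizing $|x_k'/x_k|$, apply the triangle inequality to row $k$ of the differentiated mass balance to get $(x_k+4d_{kk})\,|x_k'/x_k|\le\delta_{1k}$, conclude that $k=1$, and then read $x_1'\le x_1/(x_1+4d_{11})\le 1$ off row $1$. The differences are minor: you obtain $x_1'>0$ from positive definiteness of the symmetric, strictly diagonally dominant matrix ($r^\top M r=r_1$) rather than the paper's sign contradiction at the maximizing index, and you justify invertibility and differentiability via the implicit function theorem, which the paper takes for granted.
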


\begin{appendixproof}[Proof of \cref{thm:input-domination}]
\label{proof:input-domination}
    Let $k$ be the index of the monomer species with the maximum absolute relative sensitivity. That is
    $\left| \frac{x_k'}{x_k} \right| = \max_j \left| \frac{x_j'}{x_j} \right|$.

    We analyze \cref{eq:mass-cons-der} for the index $k$ with maximum relative sensitivity:
    \begin{equation}\label{eq:proof_step1}
        \frac{x_k'}{x_k} \bigg( x_k + \sum_{j \neq k} d_{kj} + 4 d_{kk} \bigg) + \sum_{j \neq k} d_{kj} \frac{x_j'}{x_j} = \delta_{1k}.
    \end{equation}
    
    We first establish the sign of $x_k'$. 
    Assume for the sake of contradiction that $x_k' < 0$. By the definition of $k$, we know $\left| \frac{x_j'}{x_j} \right| \le \left| \frac{x_k'}{x_k} \right|$. It follows that:
    \begin{equation*}
        \sum_{j \neq k} d_{kj} \frac{x_j'}{x_j} \le \sum_{j \neq k} d_{kj} \left| \frac{x_k'}{x_k} \right|.
    \end{equation*}
    Substituting this bound into \cref{eq:proof_step1} yields:
    \begin{equation*}
        \frac{x_k'}{x_k} \bigg( x_k + \sum_{j \neq k} d_{kj} + 4 d_{kk} \bigg) + \sum_{j \neq k} d_{kj} \left| \frac{x_k'}{x_k} \right| \ge \delta_{1k}.
    \end{equation*}
    Since we assumed $x_k' < 0$, we have $\frac{x_k'}{x_k} = - \left| \frac{x_k'}{x_k} \right|$. The terms involving $\sum_{j \neq k} d_{kj}$ cancel out, simplifying the inequality to:
    \begin{equation*}
        \frac{x_k'}{x_k} \left( x_k + 4 d_{kk} \right) \ge \delta_{1k}.
    \end{equation*}
    The left-hand side is strictly negative (as $x_k>0$ and $d_{kk} \ge 0$), while the right-hand side is non-negative ($\delta_{1k} \in \{0, 1\}$). This is a contradiction. Therefore, $x_k'$ must be non-negative.
    
    Having established $x_k' \ge 0$, we now derive the upper bound. Rearranging \cref{eq:proof_step1}:
    \begin{equation*}
        \frac{x_k'}{x_k} \bigg( x_k + \sum_{j \neq k} d_{kj} + 4 d_{kk} \bigg) = \delta_{1k} - \sum_{j \neq k} d_{kj} \frac{x_j'}{x_j}.
    \end{equation*}
    Taking the absolute value and applying the triangle inequality:
    \begin{align*}
        \left| \frac{x_k'}{x_k} \right| \bigg( x_k + \sum_{j \neq k} d_{kj} + 4 d_{kk} \bigg) \le \delta_{1k} + \bigg| \sum_{j \neq k} d_{kj} \frac{x_j'}{x_j} \bigg| 
        \le \delta_{1k} + \sum_{j \neq k} d_{kj} \bigg| \frac{x_k'}{x_k} \bigg|.
    \end{align*}
    Subtracting $\sum_{j \neq k} d_{kj} |\frac{x_k'}{x_k}|$ from both sides yields:
    \begin{equation*}
        \left| \frac{x_k'}{x_k} \right| \left( x_k + 4 d_{kk} \right) \le \delta_{1k}.
    \end{equation*}
    Since $x_k > 0$ and $d_{kk} \ge 0$, the term $(x_k + 4d_{kk})$ is strictly positive.
    If $k \neq 1$, then $\delta_{1k} = 0$, which implies $|x_k'/x_k| \le 0 \implies x_k' = 0$. However, since $k$ is the index of maximum sensitivity, this would imply all sensitivities are zero, which contradicts the existence of the perturbation~$\sigma$.
    Therefore, the maximum relative sensitivity must occur for the input species, i.e., $k=1$.
    
    Substituting $k=1$ back into the inequality gives $x_1' \left( x_1 + 4 d_{11} \right) \le x_1$.
    Solving for the sensitivity $x_1'$:
    \begin{equation*}\label{eq:bound_final}
        x_1' \le \frac{1}{1 + 4 \frac{d_{11}}{x_1}} \le 1.
    \end{equation*}
    Thus, the concentration of the free monomer species $X_1$ cannot increase more than the concentration of the added signal $\sigma$.
\end{appendixproof}

\subsection{System-Wide Attenuation of Equilibrium Responses}

Having established that $|\frac{x_1'}{x_1}|$ is the global maximum for relative sensitivity, we now analyze the network-wide response.
At the fixed equilibrium under consideration, keep the input species indexed as $1$ and re-index the remaining species, breaking ties arbitrarily, so that the relative sensitivities are ordered by magnitude:
\begin{equation}\label{eq:ordering_assumption}
    \left| \frac{x_1'}{x_1} \right| \ge \left| \frac{x_2'}{x_2} \right| \ge \dots \ge \left| \frac{x_n'}{x_n} \right|.
\end{equation}

Now, let $d_{ij}' = \frac{dd_{ij}}{d\sigma}$ denote the sensitivity of the dimer $D_{ij}$. 

\begin{lemma}[Sign Coherence]
\label{lemma:sign_coherence}
    Given the ordering in \cref{eq:ordering_assumption}, for any dimer species $D_{ij}$ with $i < j$, the sign of the sensitivity $d_{ij}'$ is determined by the monomer species with the lower index (higher sensitivity). Specifically, if $d_{ij}' \neq 0$, then $\operatorname{sgn}(d_{ij}') = \operatorname{sgn}(x_i')$.
\end{lemma}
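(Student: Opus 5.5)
Differentiate the mass-action law \cref{eq:eq_conc} with respect to $\sigma$. For $i<j$ with $K_{ij}>0$ this gives
\[
d_{ij}' = K_{ij}(x_i' x_j + x_i x_j') = d_{ij}\left(\frac{x_i'}{x_i} + \frac{x_j'}{x_j}\right).
\]
If $K_{ij}=0$, then $d_{ij}\equiv 0$ and $d_{ij}'=0$, so there is nothing to prove. Hence the sign of $d_{ij}'$ is the sign of the sum of two relative sensitivities, since $d_{ij}>0$ whenever $K_{ij}>0$ (all free concentrations are positive at equilibrium).

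Next, use the ordering \cref{eq:ordering_assumption}. Since $i<j$, we have $\left|\frac{x_i'}{x_i}\right| \ge \left|\frac{x_j'}{x_j}\right|$. Consider the sum $a+b$ with $a=\frac{x_i'}{x_i}$ and $b=\frac{x_j'}{x_j}$, where $|a|\ge|b|$. If $a>0$, then $a+b \ge a-|b| \ge 0$, so $a+b$ is either zero or has the sign of $a$. The case $a<0$ is symmetric. If $a=0$, then $b=0$ and $d_{ij}'=0$. Therefore, whenever $d_{ij}'\neq 0$, we get $\operatorname{sgn}(d_{ij}') = \operatorname{sgn}(a) = \operatorname{sgn}(x_i')$, because $x_i>0$.

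The one subtle point is ties in the ordering. If $|a|=|b|$ with opposite signs, the sum vanishes, and the hypothesis $d_{ij}'\neq 0$ excludes this case. If the signs agree, the conclusion holds for either index. So breaking ties arbitrarily causes no difficulty. \Cref{thm:input-domination} is only needed to justify placing the input at index $1$ consistently with the ordering; the lemma itself needs only the ordering and positivity of the equilibrium concentrations.
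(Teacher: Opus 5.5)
Your proof is correct and follows essentially the same route as the paper. Both differentiate $d_{ij}=K_{ij}x_ix_j$ to get $d_{ij}'=d_{ij}\bigl(x_i'/x_i+x_j'/x_j\bigr)$, then use the ordering to show the $i$-term weakly dominates, and note that the only remaining case (equal magnitudes, opposite signs) forces $d_{ij}'=0$. Your explicit handling of $K_{ij}=0$ and of $x_i'=0$ spells out details the paper leaves implicit.
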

\begin{proof}
    From \cref{eq:eq_conc}, $d_{ij}' = d_{ij} (\frac{x_i'}{x_i} + \frac{x_j'}{x_j})$. Since $i < j$, our ordering implies $|\frac{x_i'}{x_i}| \ge |\frac{x_j'}{x_j}|$. Thus, the first term weakly dominates the sum, and the sign of $d_{ij}'$ follows the sign of $x_i'$. 
    In the symmetric case where $\frac{x_i'}{x_i} = - \frac{x_j'}{x_j}$, the terms cancel, yielding $d_{ij}'=0$.
\end{proof}

We can now rewrite the derivative of the mass conservation equation (\cref{eq:mass_conservation}) for species $i$ by splitting the sum into terms with indices $j>i$ and $j<i$:
\begin{equation}\label{eq:main-eq}
    x_i' + \sum_{j > i} d_{ij}' + 2 d_{ii}' = \delta_{1i} - \sum_{j < i} d_{ji}'.
\end{equation}
By \cref{lemma:sign_coherence}, for all $j > i$, $d_{ij}'$ has the same sign as $x_i'$ or is zero. Similarly, $d_{ii}'$ clearly has the same sign as $x_i'$. Therefore, all non-zero terms on the left-hand side of \cref{eq:main-eq} share the same sign. This allows us to take the absolute value of the sum as the sum of absolute values:
\begin{equation}\label{ineq:main}
    |x_i'| + \sum_{j > i} |d_{ij}'| + 2 |d_{ii}'| = \Big| \delta_{1i} - \sum_{j < i} d_{ji}' \Big| \le \delta_{1i} + \sum_{j < i} |d_{ji}'|.
\end{equation}

This inequality bounds the changes in monomer species $i$ and its dimers with higher-indexed monomers ($j>i$) by the changes in its dimers with lower-indexed monomers ($j<i$).
We now generalize this to show system-wide attenuation.

\begin{theorem}[No-Go Theorem for Dimerization Amplification]
\label{thm:suppression}
    For any monomer species $X_i$ in the ordered dimerization network, the sensitivities of
    $X_i$ and its dimers $D_{ij}$ with $j \geq i$ satisfy: 
    \begin{equation}\label{ineq:supp}
        |x_i'| + \sum_{j > i} |d_{ij}'| + 2 |d_{ii}'| \le 1 - \sum_{j < i} \bigg( |x_j'| + 2|d_{jj}'| + \sum_{k > i} |d_{jk}'| \bigg).
    \end{equation}
\end{theorem}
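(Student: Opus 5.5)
The plan is to prove \cref{ineq:supp} by induction on $i$, using only the per-species inequality \cref{ineq:main}, which \cref{lemma:sign_coherence} and the ordering \cref{eq:ordering_assumption} already justify. The first step is to recast the statement in telescoping form. Define
\begin{equation*}
    T_i \;=\; \sum_{j \le i} \bigg( |x_j'| + 2|d_{jj}'| + \sum_{k > i} |d_{jk}'| \bigg).
\end{equation*}
Take the $j=i$ term of this sum. Its inner sum $\sum_{k>i}|d_{ik}'|$ is exactly $\sum_{j>i}|d_{ij}'|$ from the left-hand side of \cref{ineq:supp}. Hence \cref{ineq:supp} is equivalent to $T_i \le 1$ for every $i$. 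Writing $S_i = |x_i'| + \sum_{j>i}|d_{ij}'| + 2|d_{ii}'|$, we also have
\begin{equation*}
    T_i \;=\; S_i + \sum_{j<i}\bigg(|x_j'| + 2|d_{jj}'| + \sum_{k>i}|d_{jk}'|\bigg).
\end{equation*}

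\emph{Base case} ($i=1$). The sum over $j<1$ is empty, so $T_1 = S_1$. \Cref{ineq:main} with $\delta_{11}=1$ and an empty sum on the right gives $S_1 \le 1$.

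\emph{Inductive step} ($i>1$). Assume $T_{i-1}\le 1$. The key observation concerns how $T_{i-1}$ and $T_i$ differ. For each $j\le i-1$, the inner sum in $T_{i-1}$ runs over $k>i-1$, that is $k\ge i$. It therefore contains exactly one extra term, $|d_{ji}'|$, compared with the inner sum over $k>i$. This gives the identity
\begin{equation*}
    T_{i-1} - \sum_{j<i}|d_{ji}'| \;=\; \sum_{j<i}\bigg(|x_j'| + 2|d_{jj}'| + \sum_{k>i}|d_{jk}'|\bigg) \;=\; T_i - S_i .
\end{equation*}
Now apply \cref{ineq:main} at index $i$. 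Since $i\neq 1$, we have $\delta_{1i}=0$, and so $S_i \le \sum_{j<i}|d_{ji}'|$. Substituting into the identity yields
\begin{equation*}
    T_i \le T_{i-1} \le 1,
\end{equation*}
which closes the induction.

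The only real care needed is the index bookkeeping in the identity relating $T_{i-1}$ and $T_i$. One must check that every dimer $D_{jk}$ with $j<i<k$, and every dimer $D_{ji}$ with $j<i$, is counted exactly once on each side, with the convention $D_{jk}=D_{kj}$ used throughout. One must also check that $D_{ii}$ enters only through $S_i$ and not through the sum over $j<i$. I expect this bookkeeping to be the main (if modest) obstacle.

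All the analytic content sits in \cref{ineq:main}. That inequality in turn rests on the sign coherence of \cref{lemma:sign_coherence}, which ensures that all nonzero terms on the left-hand side of \cref{eq:main-eq} share a sign. The induction itself is purely combinatorial.
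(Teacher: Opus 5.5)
Your proof is correct and is essentially the paper's argument: both add up \cref{ineq:main} over the indices $1,\dots,i$. The paper does this directly, isolating $|d_{ji}'|$ for each $j<i$ and cancelling the internal double sums, while your telescoping quantity $T_i$ with $T_i \le T_{i-1}$ performs the same cancellation one index at a time, which makes the index bookkeeping somewhat cleaner.
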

Here, the left-hand side represents the total concentration sensitivity directly involving monomer~$i$, including its free form and its dimers with higher-indexed species ($j \ge i$). The right-hand side shows that this quantity is bounded by the remaining portion of the unit input sensitivity after the lower-indexed terms are accounted for; the proof of \cref{thm:suppression} is given in \cref{proof:suppression}.
The main takeaway of the theorem is that the sensitivities of both monomer species and dimers are individually bounded: $|x_i'| \le 1$ and $|d_{ij}'| \le 1$ for all $i,j$.

\begin{appendixproof}[Proof of \cref{thm:suppression}]
\label{proof:suppression}
    To prove the theorem, we must bound the term $\sum_{j < i} |d_{ji}'|$. Applying \cref{ineq:main} to each lower-indexed species $j<i$ and isolating the term $d_{ji}'$ gives:
    \begin{equation*}
        |d_{ji}'| \le \delta_{1j} + \sum_{k < j} |d_{kj}'| - |x_j'| - \sum_{\substack{k > j \\ k \neq i}} |d_{jk}'| - 2|d_{jj}'|.
    \end{equation*}
    Summing this over all $j < i$:
    \begin{equation}\label{ineq:minor}
        \sum_{j < i} |d_{ji}'| \le \sum_{j < i} \bigg( \delta_{1j} + \sum_{k < j} |d_{kj}'| - |x_j'| - \sum_{\substack{k > j \\ k \neq i}} |d_{jk}'| - 2|d_{jj}'| \bigg).
    \end{equation}
    
    We now simplify the double summations over the interaction terms $d$. The expression contains two sums: one over $k < j$ (complexes made of lower-indexed species) and one over $k > j$ (complexes made of higher-indexed species).
    \begin{equation*}
        \sum_{j < i} \bigg( \sum_{k < j} |d_{kj}'| - \sum_{\substack{k > j \\ k \neq i}} |d_{jk}'| \bigg) = - \sum_{j < i} \sum_{k > i} |d_{jk}'|.
    \end{equation*}
    This equality holds because for any pair $\{u, v\}$ both less than $i$, the term $|d_{uv}'|$ appears once positively (when $j=v, k=u$) and once negatively (when $j=u, k=v$). These internal terms cancel out, leaving only the negative terms where $k > i$.
    
    Substituting this simplification back into \cref{ineq:minor}:
    \begin{equation}\label{ineq:minor2}
        \sum_{j < i} |d_{ji}'| \le \sum_{j < i} \bigg( \delta_{1j} - |x_j'| - 2|d_{jj}'| - \sum_{k > i} |d_{jk}'| \bigg).
    \end{equation}
    
    Finally, we merge \cref{ineq:minor2} into \cref{ineq:main}:
    \begin{equation*}
        |x_i'| + \sum_{j > i} |d_{ij}'| + 2 |d_{ii}'| \le \delta_{1i} + \sum_{j < i} \delta_{1j} - \sum_{j < i} \bigg( |x_j'| + 2|d_{jj}'| + \sum_{k > i} |d_{jk}'| \bigg).
    \end{equation*}
    Since $\sum_{j \le i} \delta_{1j} = 1$ (as the sum includes the only perturbed species $X_1$), we arrive at the inequality stated in \cref{ineq:supp}.
\end{appendixproof}

\begin{corollary}[Total Variation Bound]
    The sum of the absolute sensitivities of the monomer species is bounded by unity;
    homodimer formation strictly tightens this bound.
    \begin{equation}
        \sum_{i=1}^n |x_i'| \le 1 - 2\sum_{i=1}^n |d_{ii}'|.
    \end{equation}
\end{corollary}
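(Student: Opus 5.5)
Apply \cref{thm:suppression} at the last index $i=n$ of the ordering in \cref{eq:ordering_assumption}. For $i=n$ there are no indices $j>n$, so the left-hand side of \cref{ineq:supp} reduces to $|x_n'| + 2|d_{nn}'|$. On the right-hand side, for every $j<n$ the inner sum $\sum_{k>n}|d_{jk}'|$ is empty. The inequality therefore becomes
\begin{equation*}
    |x_n'| + 2|d_{nn}'| \le 1 - \sum_{j<n}\big(|x_j'| + 2|d_{jj}'|\big).
\end{equation*}
Moving the sum to the left gives $\sum_{i=1}^n |x_i'| + 2\sum_{i=1}^n |d_{ii}'| \le 1$, which is exactly the claimed bound. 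The re-indexing used in \cref{eq:ordering_assumption} does not matter, because both sums in the corollary run over all species and are invariant under permutation.

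One point needs care: the proof of \cref{thm:suppression} merged \cref{ineq:minor2} into \cref{ineq:main}, and at $i=n$ this step must still be valid. The telescoping cancellation in that proof involves pairs $\{u,v\}$ with both indices below $i$, and the only surviving cross terms are those with $k>i$. When $i=n$ there are no such terms, so the cancellation is complete and no heterodimer sensitivities remain. This is the only step I would check explicitly. I do not expect it to be an obstacle, since the theorem is stated for every $i$, including $i=n$.

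Finally, I would explain the phrase ``homodimer formation strictly tightens this bound.'' Each homodimer with nonzero sensitivity $d_{ii}'$ reduces the budget for monomer sensitivities by $2|d_{ii}'|>0$. In particular, $\sum_i|x_i'|\le 1$ always holds, and the inequality is strict whenever some homodimer responds to the input.
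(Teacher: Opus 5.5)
Your proposal is correct and follows the paper's own argument exactly: substitute $i=n$ into \cref{ineq:supp}, note that every sum over indices greater than $n$ is empty, and move the lower-indexed monomer and homodimer terms to the left-hand side. Your added remarks, that the sums are invariant under the re-indexing and that the bound is strict whenever some $d_{ii}' \neq 0$, are accurate.
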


\begin{proof}
    This follows directly by substituting $i=n$ into \cref{ineq:supp}.
\end{proof}

This corollary can be thought to strengthen \cref{thm:input-domination} in two ways. First, \cref{thm:input-domination} bounds the relative sensitivities of non-input monomers, but a small relative sensitivity could still correspond to a large absolute concentration change for an abundant species. The total variation bound rules this out. Second, even if no single monomer changes by more than the input, one might imagine a large distributed response spread across many monomer species. The corollary rules this out as well.

\section{Breaking the Limit: Trimerization-Based Amplification}\label{sec:trimerization}

In \cref{sec:dimerization}, we established a fundamental structural constraint: reaction networks limited to dimerization cannot amplify an input signal, regardless of the underlying thermodynamic parameters. In this section, we show that by relaxing the constraint to allow for the formation of higher-order complexes (specifically, only trimers are needed), signal amplification becomes realizable.
We propose concrete trimerization-based architectures, and present wet-lab experimental data showing amplification with a trimeric amplifier design.

\subsection{Entropy-Driven Trimer-Based Amplifier}\label{sec:entropy-driven}

\begin{figure}[t]
    \centering
    \includegraphics[width=0.58\textwidth]{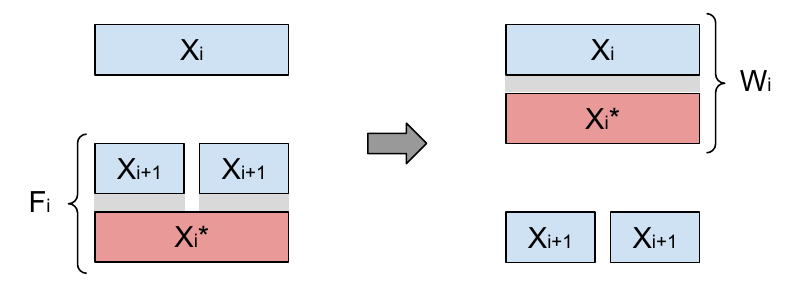}
    \caption{Reaction schematic of the entropy-driven amplifier.}
    \label{fig:entropy-driven}
\end{figure}

A simple construction for a trimer-based amplifier is the entropy-driven design illustrated in \cref{fig:entropy-driven}. This system leverages the thermodynamic drive toward greater global entropy (in the sense of the number of separate complexes) following the introduction of the input $X_0$.

Intuitively, the system implements the following abstract reaction:
\begin{equation*}
    X_{i} + F_{i} \rightarrow W_{i} + 2 X_{i+1}
\end{equation*}
Such idealized reactions, when composed,
implement amplification exponential in the number of modules:
one copy of upstream input $X_i$ can release two copies of downstream input $X_{i+1}$, doubling at every module.\footnote{If higher-order complexes (larger than trimers) are allowed, such complexes can form by exchanging sequestered signals $X_j$ among different fuel molecules $F_i$. 
As an example, $F_i + F_{i+1} \rightleftharpoons Q + W_{i+1} $, where $Q$ is formed of one copy of $X_i^*$, one copy of $X_{i+1}$, and two copies of $X_{i+2}$.
However, in the absence of input, the system still remains trapped in an equilibrium distribution without the final product $X_n$.}
Thus idealized, 
the overall stoichiometric transformation from initial reactants to final products is given by
\begin{equation*}
    X_0 + \sum_{i=0}^{n-1} 2^{i} \, F_{i} \rightarrow \sum_{i=0}^{n-1} 2^{i} \, W_{i} + 2^n X_n
\end{equation*}
and the theoretical (maximum) amplification factor of this network is $\beta = 2^n$.

There are two fundamental problems with this idealized analysis that limit amplification.
The first is the obvious structural limitation:
If the system is implemented with DNA, then, as described, each signal $X_i$ is half the length of the previous one.\footnote{Because each step in the cascade effectively ``divides'' the original input to generate multiple outputs, the amplification factor is limited by the total length of the input in base pairs $|X_0|$ and the minimum domain size~$\kappa$---the minimum strand length necessary for reliable molecular recognition and binding: $\beta \le \frac{|X_0|}{\kappa}$.}
The second, deeper limitation is based on general thermodynamic considerations, follows from \Cref{sec:thermo_bounds}, and will be explored later.

In the following section, we overcome the structural limitation by introducing an \emph{isometric} amplifier---a design that preserves the size of the output relative to the input.

\subsection{Isometric Trimer-Based Amplifier}\label{sec:isometric}

To overcome the obvious scaling limitation of the above entropy-driven cascades, we introduce a minimal---still with maximum complex size three---network designed to amplify an input signal while strictly preserving the size of the output. 
This \emph{isometric} amplification relies on a carefully engineered partial ordering of binding affinities among the system's underlying complexes.

\begin{figure}[t]
    \centering
    \includegraphics[width=0.4\textwidth]{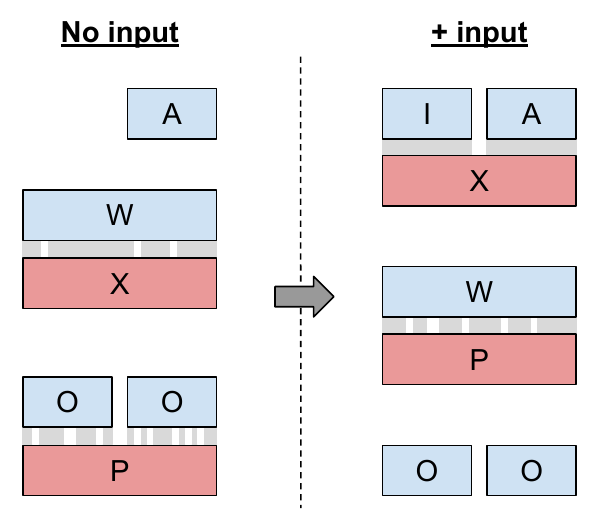}
    \caption{Schematic of the proposed isometric trimerization-based amplifier. \textbf{(Left)} The thermodynamically favored configuration in the absence of the input signal I. Strand W exhibits a higher binding affinity for X than for P. Consequently, the output strands O remain sequestered in a stable trimeric complex with the ``locking'' molecule P, forming O:O:P. 
    \textbf{(Right)} The thermodynamically favored configuration following the introduction of the input. Input I cooperatively binds with ``helper'' species A and X to form the I:A:X complex, displacing W from the W:X complex. The newly liberated strand W then drives the dissociation of the O:O:P complex by binding to P (forming W:P), thereby releasing the free output O. The forward directionality of this cascade is governed by the thermodynamic hierarchy of the complex affinities: $K_{\text{I:A:X}} > K_{\text{W:X}} > K_{\text{W:P}} > K_{\text{O:O:P}}$. Crucially, in the absence of I, species A possesses insufficient affinity to displace X or P. \textit{Note: An increased density of white gaps within the gray binding regions visually denotes progressively weaker binding affinities.}}
    \label{fig:trimer-model}
\end{figure}

\cref{fig:trimer-model} illustrates the proposed reaction pathway. 
Unlike the entropy-driven amplification scheme above, where the output species is smaller than the input, our isometric design generates an output molecule O of comparable size to the input I. This size preservation is a critical feature for building deeper, multi-stage cascading amplifiers so that the output of one stage can serve as an input for the subsequent stage.
Thermodynamically, the system is designed such that enough energy is released upon the formation of the input complex I:A:X to effectively ``pay'' the energetic penalty required to dissociate the output complex O:O:P.

In order to avoid leak (release of O without I), we must limit the affinities in the affinity hierarchy noted in the figure caption. 
Specifically, without input I, molecule A can displace half of the W:X complex. 
Then this free half of W may displace an output molecule O from the O:O:P complex, i.e., A + W:X + O:O:P $\rightarrow$ A:W:X:O:P + O.\footnote{While our theoretical model restricts complex size to three, mitigating the formation of higher-multimericity off-model polymers remains a practical engineering goal.}
This reaction pathway is inhibited because it reduces the number of separate complexes by one, but we need to ensure that the increased bonding does not compensate for the loss of entropy.

By strategically engineering sequence mismatches, we can tune the free energies of the complexes, establishing the required partial order of binding affinities and their relationship to separate complex formation entropy. 
Furthermore, because of partial complementarity, the sequence of O can be sufficiently different from I that several identical isometric amplification modules could be composed, where the output of one is input to another.

\paragraph*{Experimental Results} 

To validate the theoretical framework of the isometric trimer-based amplifier, we implemented an experimental model using synthetic DNA. 
First, we verified the theoretical performance of our designed sequences using NUPACK~\cite{NUPACK2025}. 
As observed in \cref{fig:isometric}a, while the simulated system exhibited some leakage (release of output in the absence of input) at a higher temperature (28°C), this leak was effectively eliminated by decreasing the temperature.
Overall, the NUPACK-simulated response closely follows the ideal $2\times$ slope until reaching saturation.

The experimental data shown in \Cref{fig:isometric}(a)-(b) was collected at 4°C,
but in contrast to NUPACK predictions, it still exhibits non-negligible leakage in the absence of the input (0~nM). 
This minor discrepancy may stem from several factors: imprecise stoichiometry of the added components, DNA synthesis impurities (as only the fluorophore-tagged output strand was HPLC-purified, while the rest were purified via standard PAGE), or fluorophore-induced energetic perturbations not captured by NUPACK.

\begin{figure}[t] 
    \centering 
    \includegraphics[width=0.9\textwidth]{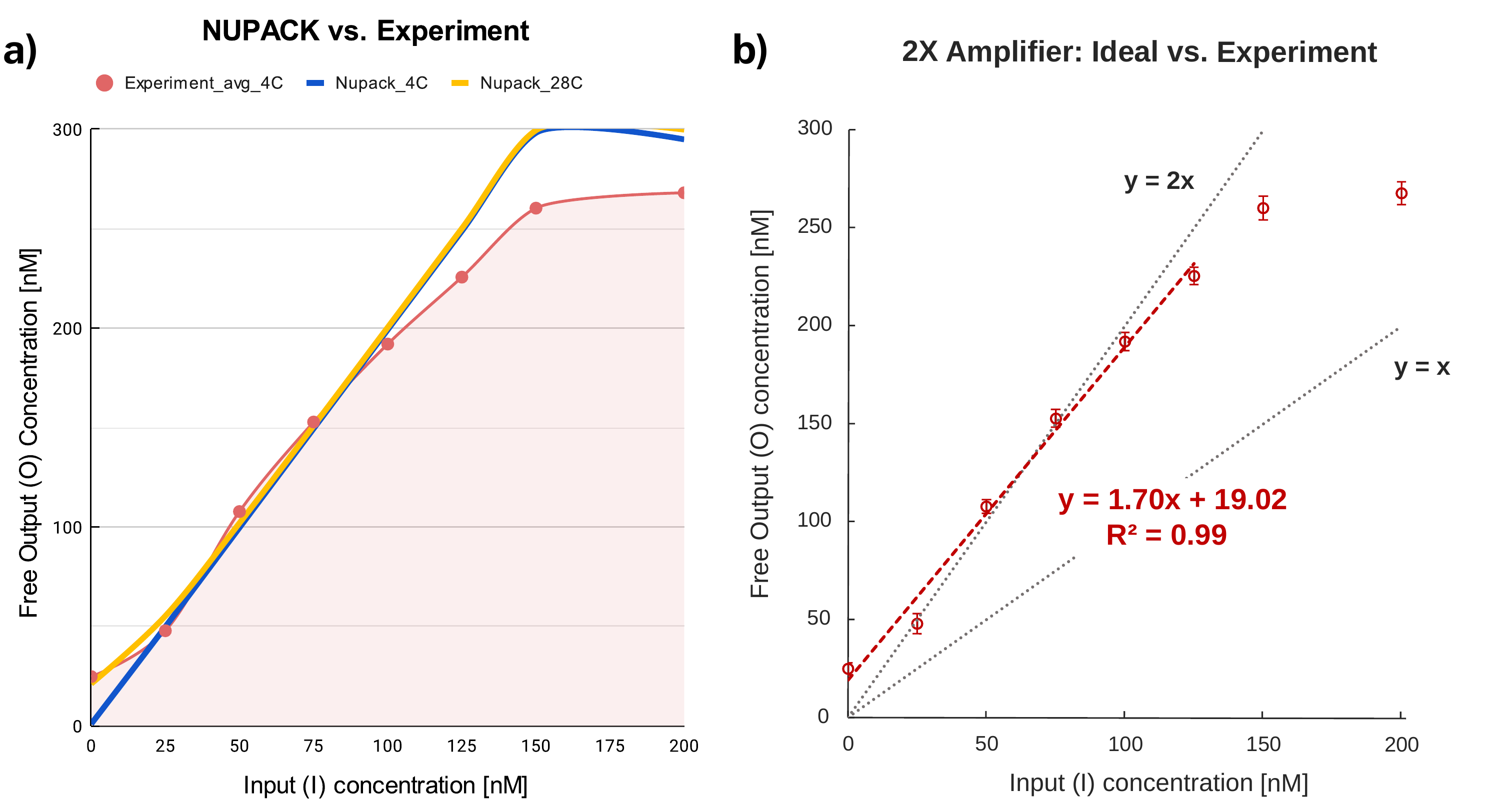} 
    \caption{Experimental results of the isometric amplifier. The input concentration is varied from 0 to 200~nM, the total concentration of the output strand is 300~nM, and all other circuit strands are set at 150~nM.  
    \textbf{(a)} Experimental data plotted against NUPACK estimated behavior.  
    \textbf{(b)}~Experimental data plotted against the ideal $2\times$ response. Linear regression of the response regime yields an experimental amplification factor of $1.7\times$.} 
    \label{fig:isometric} 
\end{figure}

As shown in \cref{fig:isometric}b, to quantify the experimental amplification factor, we performed linear regression.
While the theoretical maximum amplification factor of this specific system is $2\times$, our experimental data demonstrates $1.7\times$ signal amplification.
The regression analysis excluded the saturated data points (input concentrations of 150 nM and 200 nM) to capture the true linear amplification regime. 
(If the leakage point is also excluded from the fit to strictly evaluate the active response slope, the calculated amplification factor rises to $1.77\times$.)
Complete details of the experimental protocols, DNA sequences, and statistical methods are provided in~\cref{sec:methods}.

Since there is no physical limit imposed by shrinking molecule sizes as in the naive construction of \cref{sec:entropy-driven}, one might intuitively expect that arbitrary signal amplification is physically realizable. 
However, escaping structural limitations does not grant unlimited amplification. 
In the following section, we establish that regardless of a system's specific structural architecture or local mechanisms, there exists a fundamental, inescapable thermodynamic limit on signal amplification for \emph{all} thermodynamic equilibrium-based networks.

\section{Thermodynamic Bounds on Amplification}\label{sec:thermo_bounds}

In the previous sections, we demonstrated that fundamental structural differences (dimerization versus trimerization) can disallow or permit amplification.
We have also seen that the trimeric amplifier could in principle be composed to seemingly increase its amplification factor. 
In this section, we turn to general and unavoidable limits on amplification at thermodynamic equilibrium determined by thermodynamic considerations, regardless of structural assumptions.
Unlike prior thermodynamic bounds on nonequilibrium biochemical response~\cite{Qian2003ThermoKinetic,OwenGingrichHorowitz2020,OwenTallaBiddleGunawardena2023}, the systems considered here have no persistent chemical-potential driving: both the pre-input and post-input configurations are equilibria.
The relevant resource is therefore the binding free energy gained when the added input analyte interacts with amplifier components.

There is a strong intuition that chemical amplification should require significant energy expenditure since the system must traverse substantial ``configurational distance'' from a state with little output to one with a lot of output.
Since we are requiring the system before and after the addition of input to be at thermodynamic equilibrium, the only source of this energy must be from the input itself.
This section formalizes this reasoning, showing that the free energy with which the input binds amplifier components must scale linearly with the amplification factor. 
In the context of DNA nanotechnology, this implies that for an analyte (DNA molecule) of a fixed length, there is only so much amplification that can be realized.
Moreover, we show that the achievable amplification is further discounted by the \emph{output contrast}, defined as the ratio of the triggered output (signal) to the untriggered output (leak). 
For a fixed input free energy, demanding a sharper contrast leaves less room for amplification.

\subsection{Model}\label{sec:model}

We represent the monomer composition of any complex as a vector tuple $(u,v) \in \mathbb{N}^m \times \mathbb{N}$,
where $u$ is the vector of stoichiometric counts of non-input monomers, and $v$ is the count of the input monomer $I$.
Given a complex $(u,v)$, we define its \emph{residual core} to be just its non-input monomers.
The $L_1$-norm $|u| = \sum_{i=1}^m u_i$ of $(u,v)$ is the size of its residual core. Similarly, we define the \emph{input core} of $(u,v)$ as its constituent input monomers, with its size naturally given by $v$.

The concentration of a specific complex is denoted $x_{u,v}$. This vector representation naturally partitions the network into three distinct pools: pure residual complexes ($v=0, u \neq \mathbf{0}$), pure input complexes ($u=\mathbf{0}, v \ge 1$), and mixed complexes ($u \neq \mathbf{0}, v \ge 1$).\footnote{There is no complex $(u=\mathbf{0},v=0)$.}
All concentrations are dimensionless mole fractions---the ratio of complexes to solvent molecules (i.e., water)---and are thus smaller than one.
(The derivation of the energy function (\cref{eq:energy_function}) only
holds in the regime of substantially more dilute complexes than solvent~\cite{DBLP:journals/siam/DirksBSWP07}.)

To model the equilibrium behavior of these systems, we adopt the free-energy formulation utilized by Dirks \emph{et al.}~\cite{DBLP:journals/siam/DirksBSWP07}.
Let $g(x)$ denote the dimensionless free energy of the overall system configuration (formula given below), let $G_{u,v}$ represent the dimensionless free energy of a specific complex, and let $\mathcal{P}$ denote the set of all valid complexes in the network. 
We define the sign convention for $G_{u,v}$ such that more negative values correspond to more thermodynamically favorable complexes.

Let $\sigma$ represent the total concentration of the input monomer $I$.
The equilibrium concentrations are obtained by minimizing the system's total free energy (corresponding to the pseudo-Helmholtz free energy utilized extensively throughout the chemical reaction network theory literature~\cite{horn1972general,feinberg2019foundations}):\footnote{In this work, $\log$ denotes the natural logarithm.}
\begin{equation}\label{eq:energy_function}
    g(x) = \sum_{(u,v) \in \mathcal{P}} x_{u,v} (\log x_{u,v} - 1 + G_{u,v})
\end{equation}
This minimization is subject to the mass conservation constraint $A x = x^0(\sigma)$, where $x$ is the vector of complex concentrations, $x^0(\sigma)$ is the vector representing the total concentrations of all monomers (including the amount $\sigma$ of added input $I$), and $A$ is the mass conservation matrix mapping the concentrations of complexes to their constituent monomers.

To formally manipulate the system configurations, we define two projection operators that isolate the cores from the overall complex concentration vector $x$. 

The \textbf{Input-Core operator}, denoted $\ic[x]$, extracts the input cores by stripping away all residual components. 
In other words, it removes all residual monomers, and for every $v \geq 1$, creates a complex containing $v$ input monomers whose concentration is the sum of the concentrations of all complexes containing exactly $v$ input monomers in $x$. Formally, we define the resulting vector $w = \ic[x]$:
\begin{equation} \label{eq:ic}
    w = \ic[x] \implies 
    \begin{cases} 
        w_{(\mathbf{0},v)} = \displaystyle \sum_{u} x_{(u,v)} \\[3ex]
        w_{(u,v)} = 0 & \text{for } u \neq \mathbf{0}
    \end{cases}
\end{equation}

Conversely, the \textbf{Residual-Core operator}, denoted $\rc[x]$, extracts the residual cores by stripping away all input monomers. Analogously to \Cref{eq:ic}, we define the resulting vector $w = \rc[x]$:
\begin{equation}
    w = \rc[x] \implies 
    \begin{cases} 
        w_{(u,0)} = \displaystyle \sum_{v} x_{(u,v)} \\[3ex]
        w_{(u,v)} = 0 & \text{for } v \ge 1
    \end{cases}
\end{equation}

\subsection{Energy Change upon Input Addition}\label{sec:energy-change}

We now investigate the total change in system free energy upon the introduction of the input. 
To quantify this transition, we compare the thermodynamic equilibria of the network before and after the input is introduced---configurations we formalize as the \emph{Resting} and \emph{Active} states, respectively. 
To establish a rigorous mathematical bound on this free-energy difference, we will also define a specific configuration termed the \emph{Drained} state. 

The Active state is the thermodynamic equilibrium of the entire system with concentration $\sigma$ of the input $I$.
Formally:
\begin{definition}\label{def:active}
    For any input concentration $\sigma \geq 0$, the
     \textbf{Active state} $\xact$ is the concentration vector that minimizes the total system free energy $g(x)$ subject to mass conservation:
    \begin{equation}
        \begin{aligned}
            \xact = \operatorname{argmin}_{x \ge 0} \quad & g(x) \\
            \text{s.t.} \quad & A x = x^0(\sigma)
        \end{aligned}
    \end{equation}
\end{definition}

Next we define the Drained state, which represents ``ripping'' the input out of the Active state without allowing the system to relax to a new equilibrium.
Formally, the Drained state is constructed by separating the input cores and residual cores of the Active state:
\begin{definition}\label{def:drained}
    \textbf{Drained state} $\xdrn$ is defined by:
    \begin{equation}
        \xdrn = \ic[\xact] + \rc[\xact]
    \end{equation}
\end{definition}
Note that the dissociation into input cores and residual cores preserves mass conservation $A \xdrn = x^0(\sigma)$. 

Finally, we define the Resting state, which represents the thermodynamic equilibrium without input to which the input is added without being able to interact with the rest of the system.
More precisely, the Resting state is the thermodynamic equilibrium of the system without the input, plus $\ic[\xact]$.
Formally:
\begin{definition}\label{def:resting}
    \textbf{Resting state} $\xrst$ is defined by:
    \begin{equation}
        \begin{aligned}
            \xrst = \ic[\xact] + \operatorname{argmin}_{x \ge 0} \quad & g(x) \\
            \text{s.t.} \quad & A x = x^0(0) 
        \end{aligned}
    \end{equation}
\end{definition}

\begin{lemma}[Free Energy Hierarchy]
\label{lem:energy_hierarchy}
    The free energies of the Active, Resting, and Drained states satisfy the following inequality:
    \begin{equation*}
        g(\xact) \le g(\xrst) \le g(\xdrn)
    \end{equation*}
\end{lemma}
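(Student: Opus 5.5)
The plan is to prove both inequalities by feasibility comparisons, using the variational definitions of the Active state (\cref{def:active}) and of the input-free equilibrium that enters the Resting state (\cref{def:resting}). Two structural facts about $g$ do most of the work. First, $g$ in \cref{eq:energy_function} is a sum of terms, each depending on a single complex concentration. Second, with the convention $0\log 0=0$, a complex with zero concentration contributes nothing. Together these make $g$ additive over vectors with disjoint supports: $g(a+b)=g(a)+g(b)$ whenever no complex $(u,v)$ has both $a_{u,v}>0$ and $b_{u,v}>0$.

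\textbf{First inequality.} I will show that $\xrst$ is a feasible point of the minimization defining $\xact$.
\begin{itemize}
\item Nonnegativity holds because both summands of $\xrst$ are nonnegative.
\item Let $y$ denote the input-free minimizer in \cref{def:resting}. Then $A y = x^0(0)$, which accounts for all non-input monomers and contains no input.
\item The vector $\ic[\xact]$ contains no residual monomers. Its total input count is $\sum_v v\sum_u x^{\mathrm{act}}_{u,v}=\sigma$, because $\xact$ satisfies $A\xact=x^0(\sigma)$.
\end{itemize}
Hence $A\xrst = x^0(0)+A\,\ic[\xact] = x^0(\sigma)$, and minimality of $\xact$ gives $g(\xact)\le g(\xrst)$.

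\textbf{Second inequality.} I will use additivity over disjoint supports.
\begin{itemize}
\item $\ic[\xact]$ is supported on pure input complexes $(\mathbf 0,v)$ with $v\ge1$.
\item $\rc[\xact]$ and $y$ are supported on pure residual complexes $(u,0)$ with $u\ne\mathbf 0$.
\end{itemize}
These index sets are disjoint, so
\[
g(\xdrn)=g(\ic[\xact])+g(\rc[\xact]),\qquad g(\xrst)=g(\ic[\xact])+g(y).
\]
It remains to show $g(\rc[\xact])\ge g(y)$. This holds because $\rc[\xact]$ is feasible for the input-free problem. It is nonnegative. Stripping input monomers leaves every non-input monomer count unchanged, while $\rc[\xact]$ contains no input, so $A\,\rc[\xact]=x^0(0)$. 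Minimality of $y$ then gives $g(y)\le g(\rc[\xact])$, and therefore $g(\xrst)\le g(\xdrn)$.

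\textbf{Main obstacle.} The delicate step is making sure every vector being compared is a legitimate point in the domain of $g$. Concretely, the pure input cores $(\mathbf 0,v)$ and the residual cores $(u,0)$ produced by $\ic$ and $\rc$ must belong to $\mathcal P$ and carry well-defined energies $G$. The model's definition of the operators implicitly assumes this closure, and I would state it explicitly as a standing assumption: $\mathcal P$ is closed under taking cores.

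Once closure is granted, two smaller points need a sentence each. The input-free minimizer exists, since it minimizes a continuous function over a compact polytope. Using that minimizer inside $\xrst$ is consistent, because complexes containing input have zero concentration at $\sigma=0$. With these in place, everything else is bookkeeping on $A$.
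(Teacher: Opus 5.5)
Your proof is correct and follows essentially the same route as the paper. Feasibility of $\xrst$ for the Active minimization gives $g(\xact)\le g(\xrst)$. For the second inequality, the Resting and Drained states share the input pool $\ic[\xact]$, and $\rc[\xact]$ is feasible for the input-free problem, so minimality of the Resting residual part gives $g(\xrst)\le g(\xdrn)$. Two points you state explicitly are left implicit in the paper's argument: additivity of $g$ over disjoint supports, and closure of $\mathcal{P}$ under taking cores.
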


The proof of \cref{lem:energy_hierarchy} is given in \cref{proof:lemma_hierarchy}.

\begin{appendixproof}[Proof of \cref{lem:energy_hierarchy}]
\label{proof:lemma_hierarchy}
\text{}
    First, we evaluate the relationship between the Active and Resting states. By definition, $\xact$ is the absolute global minimizer of the free energy $g(x)$ over the entire feasible concentration space subject to the total mass constraint $A x = x^0(\sigma)$. The Resting state, defined as $\xrst = \ic[\xact] + \operatorname{argmin}_{Ax=x^0(0)} g(x)$, is also a physically valid configuration in this same feasible space; its decoupled input mass and minimized residual mass exactly sum to the total network mass $x^0(\sigma)$. Because $\xact$ is the unconstrained minimizer over the entire valid space, and $\xrst$ represents just one specific state within that space (specifically, one restricted to have no mixed complexes), it necessarily follows that $g(\xact) \le g(\xrst)$.

    Second, we compare the Resting and Drained states. Based on \cref{def:drained} and \cref{def:resting}, both states share the exact same isolated input pool, $\ic[\xact]$. Therefore, any difference in their total free energies depends entirely on their respective residual pools. 
    
    The residual pool of the Drained state is given by the projection $\rc[\xact]$. Because this projection conserves all residual monomers from the Active state, it perfectly satisfies the isolated residual mass constraint $A \cdot \rc[\xact] = x^0(0)$. Thus, $\rc[\xact]$ is a valid concentration vector in the isolated residual space. However, the residual component of the Resting state is explicitly defined as the global free energy minimizer over this exact same isolated space (i.e., $\operatorname{argmin}_{Ax=x^0(0)} g(x)$). Because the Resting state pairs the shared input pool with the thermodynamically optimal residual configuration, whereas the Drained state pairs it with the unoptimized active projection, we must conclude that $g(\xrst) \le g(\xdrn)$.
\end{appendixproof}

We now define the following quantities with respect to the Active state $\xact$.
Let $\mathcal{U}=~\{u \neq~\mathbf{0} \mid \exists v \ge 1 \colon \xact_{u,v} > 0\}$ denote the set of residual stoichiometries $u$ that can form complexes with the input in the Active state.
Let $X_{\text{inp}} = \sum_{v\ge 1} \xact_{\mathbf{0},v}$ be the total concentration of the pure input complexes, $X_{\mathrm{res}} =\sum_{u \in \mathcal{U}} \xact_{u,0}$ be the total concentration of the pure residual complexes, and let $X_{\mathrm{mix}}=\sum_{u \in \mathcal{U}} \sum_{v\ge 1} \xact_{u,v}$ be the total concentration of the mixed complexes.

Next, we define the binding free energy, $\Delta G_{u,v}^{\text{bind}} := G_{u,v} - G_{u,0} - G_{\mathbf{0},v}$, which quantifies the thermodynamic cost of forming the complex $(u,v)$ from its isolated residual and input cores. 
Finally, we let $\delGinpstg = \min_{u,v} \Delta G_{u,v}^{\text{bind}}$ denote the most favorable binding energy achievable within the network upon interaction with the input.
Note that we use the convention that $\delGinpstg < 0$, and the more negative it is, the more favorable the interaction.

Having formally defined the Active, Drained, and Resting states, our primary objective is to quantify the total change in the pseudo-Helmholtz free energy of the system upon the addition of an input species with concentration $\sigma$. 
The Drained state serves as a crucial analytical intermediate. Starting from the Active state, where the input has already propagated through the network to trigger the targeted outputs, we conceptually decouple all bound input cores from the residual network. 
If we then allow the residual cores of this Drained state to thermodynamically relax while remaining strictly isolated from the input pool, the network equilibrates into the Resting state. 

Altogether, by comparing the Active, Drained, and Resting states, and incorporating the strongest potential energy released by input binding ($\delGinpstg$), we can rigorously bound the total pseudo-Helmholtz free energy change of the system as shown below. 
We show that the bound depends on the following dimensionless parameter, which we term amplification capacity.
Intuitively, the upper bound on amplification capacity is the maximum energy with which the input binds ($-\delGinpstg$). 
From this upper bound, we subtract the log of ``concentration of amplifier components'' (as captured by $X_{\mathrm{res}} + X_{\mathrm{mix}}$)---with smaller concentrations leading to smaller amplification capacity.\footnote{Recall that all concentrations are strictly less than one, as they are expressed as dimensionless mole fractions (the ratio of complexes to solvent molecules). Consequently, $X_{\mathrm{res}} + X_{\mathrm{mix}} < 1$.}

\begin{definition}
    We define the \emph{amplification potential} as $\psi(X, \Delta G) := -\Delta G + \log (X)$. Hence, the input amplification potential is denoted by $\psi_{\mathrm{inp}} := \psi(X_{\mathrm{res}} + X_{\mathrm{mix}}, \delGinpstg)$. We then define the \emph{amplification capacity} $\phi$ as follows:
    \begin{equation}\label{eq:Phi}
        \phi := 
        \begin{cases}
            e^{\psi_{\mathrm{inp}}} & \text{if } \psi_{\mathrm{inp}} \le 0, \\
            \psi_{\mathrm{inp}} + 1 & \text{if } \psi_{\mathrm{inp}} > 0.
        \end{cases}
    \end{equation}
\end{definition}
Below we will derive upper bounds on the amplification factor as a function of the amplification capacity, justifying this terminology.

\begin{theorem}[Input-Driven Energy Change]\label{thm:energy_bound}
    Let $\sigma$ be the total added concentration of the input species. Let $g(\xdrn)$ denote the free energy of the Drained state, let $g(\xact)$ denote the free energy of the Active state. 
    Then, the amplification capacity $\phi$ satisfies:   \begin{equation}\label{eq:upper_bound_final}
        g(\xdrn) - g(\xact) \le
        \sigma \, \phi
    \end{equation}
\end{theorem}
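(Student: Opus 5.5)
The plan is to compute $g(\xdrn)-g(\xact)$ directly, complex by complex, and then bound the result using only two ingredients: the definition of $\delGinpstg$ and elementary concavity of the logarithm. No equilibrium condition on $\xact$ should be needed; \cref{lem:energy_hierarchy} is not needed either.

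\textbf{Step 1: the exact difference.} Abbreviate $x=\xact$, and set
\[
a_u := x_{u,0}+\sum_{v\ge1}x_{u,v},\qquad b_v := x_{\mathbf 0,v}+\sum_{u\neq\mathbf 0}x_{u,v}.
\]
These are the concentrations of the pure residual and pure input complexes in $\xdrn$. Write $f(t)=t\log t - t$. The contributions to $g$ then compare as follows.
\begin{itemize}
\item \emph{Residual groups.} For each residual stoichiometry $u$, the group $\{x_{u,0}\}\cup\{x_{u,v}\}_{v\ge1}$ is merged into a single entry $a_u$. Because the linear parts of $f$ cancel within a group, this contributes $a_u\log a_u - x_{u,0}\log x_{u,0}-\sum_v x_{u,v}\log x_{u,v}$.
\item \emph{Input groups.} The same holds for each $v$, with $b_v$ in place of $a_u$.
\item \emph{Bookkeeping for mixed complexes.} Each mixed complex is counted once in $\xact$ but its mass appears in two drained complexes. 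This yields a correction $+\sum x_{u,v}\log x_{u,v}$ together with an extra linear term $-X_{\mathrm{mix}}$.
\item \emph{Energy terms.} These contribute exactly $-\sum_{u,v\ge1}x_{u,v}\Delta G^{\mathrm{bind}}_{u,v}$.
\end{itemize}
Regrouping by complex, I expect the identity
\[
g(\xdrn)-g(\xact)=\sum_{u}x_{u,0}\log\tfrac{a_u}{x_{u,0}}+\sum_v x_{\mathbf 0,v}\log\tfrac{b_v}{x_{\mathbf 0,v}}+\sum_{u\in\mathcal U,\,v\ge1}x_{u,v}\Big(\log\tfrac{a_ub_v}{x_{u,v}}-1-\Delta G^{\mathrm{bind}}_{u,v}\Big).
\]
I would verify the bookkeeping of $\pm X_{\mathrm{mix}}$ carefully here. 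This is the step most prone to sign errors.

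\textbf{Step 2: elementary bounds.}
\begin{itemize}
\item Apply $\log t\le t-1$ to the first two sums. This gives $x_{u,0}\log(a_u/x_{u,0})\le a_u-x_{u,0}=\sum_v x_{u,v}$, and similarly for the input groups, so these sums together are at most $2X_{\mathrm{mix}}$.
\item Bound $-\Delta G^{\mathrm{bind}}_{u,v}\le-\delGinpstg$.
\item Apply Jensen's inequality (concavity of $\log$, weights $x_{u,v}/X_{\mathrm{mix}}$) to the remaining logarithms:
\[
\sum x_{u,v}\log\tfrac{a_ub_v}{x_{u,v}}\le X_{\mathrm{mix}}\log\tfrac{\sum_{u\in\mathcal U,v}a_ub_v}{X_{\mathrm{mix}}}.
\]
\item Bound the numerator by factoring it: $\sum_{u\in\mathcal U}a_u=X_{\mathrm{res}}+X_{\mathrm{mix}}$, and $\sum_v b_v\le\sum_v v\,b_v=\sigma$.
\end{itemize}
Altogether this yields
\[
g(\xdrn)-g(\xact)\le h(X_{\mathrm{mix}}):=X_{\mathrm{mix}}\big(\psi_{\mathrm{inp}}+1+\log\sigma-\log X_{\mathrm{mix}}\big).
\]

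\textbf{Step 3: optimise over the unknown $X_{\mathrm{mix}}$.} Every mixed complex contains at least one input monomer, so $0\le X_{\mathrm{mix}}\le\sigma$. The function $h$ is concave, and $h'(X)=\psi_{\mathrm{inp}}+\log\sigma-\log X$.
\begin{itemize}
\item If $\psi_{\mathrm{inp}}\le0$, the maximiser $X=\sigma e^{\psi_{\mathrm{inp}}}$ is feasible and gives $h=\sigma e^{\psi_{\mathrm{inp}}}$.
\item If $\psi_{\mathrm{inp}}>0$, then $h$ is increasing on $[0,\sigma]$, and the maximum is $h(\sigma)=\sigma(\psi_{\mathrm{inp}}+1)$.
\end{itemize}
In both cases $h\le\sigma\phi$, which is \cref{eq:upper_bound_final}.

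\textbf{Main obstacle and a subtlety.} The main obstacle is Step 1: getting the regrouping, and especially the linear $-X_{\mathrm{mix}}$ term, exactly right. That term is what makes the constant $+1$ in $\phi$ come out correctly. A further point needs care in Step 2. The quantity $X_{\mathrm{res}}+X_{\mathrm{mix}}$ appearing inside $\psi_{\mathrm{inp}}$ depends on $\xact$ itself. This is harmless, because we only maximise over $X_{\mathrm{mix}}$ while treating $\psi_{\mathrm{inp}}$ as given. However, it requires restricting the sum over $a_u$ to $u\in\mathcal U$, which is legitimate because only such $u$ carry mixed terms.
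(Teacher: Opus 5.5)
Your proposal is correct and follows the paper's proof essentially step for step. Your per-complex regrouping of $g(\xdrn)-g(\xact)$ is equivalent to the paper's $\mathcal{A}_{\text{inp}}+\mathcal{A}_{\text{res}}$ form, and your bound $\log t\le t-1$ is the paper's $A\log(1+B/A)\le B$; from there both arguments use the log-sum (Jensen) inequality with the factored numerator $(X_{\text{inp}}+X_{\mathrm{mix}})(X_{\mathrm{res}}+X_{\mathrm{mix}})\le\sigma(X_{\mathrm{res}}+X_{\mathrm{mix}})$, followed by the same concave maximization of $h$ over $X_{\mathrm{mix}}\in[0,\sigma]$ with $\psi_{\mathrm{inp}}$ held fixed.
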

The proof of \cref{thm:energy_bound} is given in \cref{proof:energy-bound}.

\begin{corollary}\label{rem:energy_hierarchy_bounds}
    A consequence of the free energy hierarchy (\Cref{lem:energy_hierarchy}) is that the intermediate free energy transitions are likewise bounded. Specifically, because $g(\xact) \le g(\xrst) \le g(\xdrn)$, it holds that both $g(\xrst) - g(\xact) \le \sigma \, \phi$ and $g(\xdrn) - g(\xrst) \le \sigma \, \phi$.
\end{corollary}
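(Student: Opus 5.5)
This follows directly from \cref{lem:energy_hierarchy} and \cref{thm:energy_bound}, taking \cref{thm:energy_bound} as given. The plan is to split the total drop $g(\xdrn) - g(\xact)$ into two nonnegative pieces and bound each by the whole.

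First, write the telescoping identity
\begin{equation*}
    g(\xdrn) - g(\xact) = \bigl(g(\xdrn) - g(\xrst)\bigr) + \bigl(g(\xrst) - g(\xact)\bigr).
\end{equation*}
By \cref{lem:energy_hierarchy}, $g(\xact) \le g(\xrst)$, so the second summand is nonnegative. Likewise $g(\xrst) \le g(\xdrn)$, so the first summand is nonnegative.

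Since the sum of two nonnegative terms bounds each term, we get $g(\xrst) - g(\xact) \le g(\xdrn) - g(\xact)$ and $g(\xdrn) - g(\xrst) \le g(\xdrn) - g(\xact)$. Chaining each with \cref{eq:upper_bound_final} gives both claimed inequalities, $g(\xrst) - g(\xact) \le \sigma\,\phi$ and $g(\xdrn) - g(\xrst) \le \sigma\,\phi$.

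The only point needing care is that the Resting state in \cref{def:resting} is well defined, i.e., the input-free minimizer exists. This is implicitly assumed in \cref{lem:energy_hierarchy}, since $g$ is continuous on the compact feasible polytope, so a minimizer exists. Beyond that there is no real obstacle; the corollary is a two-line consequence of the ordering.
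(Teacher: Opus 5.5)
Your proof is correct and follows the paper's reasoning. The paper states this corollary as an immediate consequence of $g(\xact) \le g(\xrst) \le g(\xdrn)$, which makes each intermediate gap at most the total gap $g(\xdrn)-g(\xact)$, and that total is then bounded by $\sigma\,\phi$ via \cref{thm:energy_bound}. Your remark that the Resting-state minimizer exists (continuous $g$ on a compact feasible set) is a harmless addition that the paper leaves implicit.
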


\begin{appendixproof}[Proof of \cref{thm:energy_bound}]
\label{proof:energy-bound}
\text{}
\paragraph*{Step 1: Expressing $g(\xdrn) - g(\xact)$.}

Expanding the free energy function for the Drained and Active states yields:
\begin{align*}
    g(\xdrn) - g(\xact) &= \sum_{v\ge 1} \Big[ \xdrn_{\mathbf{0},v} \big( \log \xdrn_{\mathbf{0},v} - 1 + G_{\mathbf{0},v} \big) - \xact_{\mathbf{0},v} \big( \log \xact_{\mathbf{0},v} - 1 + G_{\mathbf{0},v} \big) \Big] \\
    &\quad + \sum_{u \neq \mathbf{0}} \Bigg[ \xdrn_{u,0} \big( \log \xdrn_{u,0} - 1 + G_{u,0} \big) - \sum_{v} \xact_{u,v} \big( \log \xact_{u,v} - 1 + G_{u,v} \big) \Bigg]
\end{align*}

By definition, the Drained state uses $\rc$ and $\ic$ operators mapping all mixed complexes onto their respective pure residual and pure input pools. Thus, for any pure residual core $u$ (where $u \neq \mathbf{0}$), its concentration in the Drained state is exactly the sum over all its possible input components $v$, which means $\xdrn_{u,0} = \sum_{v} \xact_{u,v}$.

Since all extracted input cores get mapped onto the pure input pool complexes of the same size, the concentration of pure input complexes of size $v$ (where $v \ge 1$) becomes $\xdrn_{\mathbf{0},v} = \sum_u \xact_{u,v}$.

Substituting these distributions into the free energy function $g(x)$ and grouping the terms by their input cores $v$ and residual cores $u$, the energy difference resolves to:
\begin{align*}
    g(\xdrn) - g(\xact) &= \sum_{v\ge 1} \Big[ \sum_u \xact_{u,v} \big( \log \sum_u \xact_{u,v} - 1 + G_{\mathbf{0},v} \big) - \xact_{\mathbf{0},v} \big( \log \xact_{\mathbf{0},v} - 1 + G_{\mathbf{0},v} \big) \Big] \\
    &\quad + \sum_{u \neq \mathbf{0}} \Bigg[ \sum_{v} \xact_{u,v} \big( \log \sum_{v} \xact_{u,v} - 1 + G_{u,0} \big) - \sum_{v} \xact_{u,v} \big( \log \xact_{u,v} - 1 + G_{u,v} \big) \Bigg]
\end{align*}

By expanding the equations and canceling the constant $-1$ terms (noting that these terms in the second line perfectly cancel each other out, while the terms of the first line leave the total scalar concentration of all mixed complexes), the energy difference resolves to:
\begin{equation}\label{eq:act-drn}
    \begin{aligned}
        g(\xdrn) - g(\xact) &= \underbrace{\sum_{v\ge 1} \Bigg[ \Big( \sum_u \xact_{u,v} \Big) \log \Big( \sum_u \xact_{u,v} \Big) - \xact_{\mathbf{0},v} \log \xact_{\mathbf{0},v} \Bigg]}_{\mathcal{A}_{\text{inp}}} \\
        &\quad + \underbrace{\sum_{u \neq \mathbf{0}} \Bigg[ \Big( \sum_{v} \xact_{u,v} \Big) \log \Big( \sum_{v} \xact_{u,v} \Big) - \sum_{v} \xact_{u,v} \log \xact_{u,v} \Bigg]}_{\mathcal{A}_{\text{res}}} \\
        &\quad - \sum_{v\ge 1} \sum_{u \neq \mathbf{0}} \left[ \xact_{u,v} \left( 1+ \Delta G_{u,v}^{\text{bind}} \right) \right]
    \end{aligned}
\end{equation}
where the quantities $\mathcal{A}_{\text{inp}}$ and $\mathcal{A}_{\text{res}}$ capture the Shannon-like entropy of mixing for the pure input complexes and the pure residual complexes, respectively.

\paragraph*{Step 2: Establishing an Upper Bound for the Free Energy Difference.}

To establish the upper bound on the free energy difference $g(\xdrn) - g(\xact)$, we must evaluate the entropic terms $\mathcal{A}_{\text{inp}}$ and $\mathcal{A}_{\text{res}}$. We rely on a fundamental algebraic inequality for logarithms: for any $A, B > 0$, the relation $A \log(1 + B/A) \le B$ holds true. Consequently, we can bound the entropy of a sum as follows:
\begin{equation*}
    (A+B) \log(A+B) - A \log A = A \log\left(1 + \frac{B}{A}\right) + B \log(A+B) \le B + B \log(A+B)
\end{equation*}

We apply this inequality directly to the sums defining $\mathcal{A}_{\text{inp}}$ and $\mathcal{A}_{\text{res}}$. For $\mathcal{A}_{\text{inp}}$, let $A = \xact_{\mathbf{0},v}$ and $B = \sum_{ u \neq \mathbf{0}} \xact_{u,v}$, such that $A+B = \xdrn_{\mathbf{0},v}$. This yields:
\begin{equation*}
    \mathcal{A}_{\text{inp}} \le \sum_{v\ge 1} \Bigg[ \Bigg( \sum_{u \neq \mathbf{0}} \xact_{u,v} \Bigg) \big( \log \xdrn_{\mathbf{0},v} + 1 \big) \Bigg] = \sum_{u \neq \mathbf{0}} \sum_{v\ge 1} \left[ \xact_{u,v} \big( \log \xdrn_{\mathbf{0},v} + 1 \big) \right]
\end{equation*}

Similarly, for $\mathcal{A}_{\text{res}}$, we set $A = \xact_{u,0}$ and $B = \sum_{v\ge 1} \xact_{u,v}$, yielding $A+B = \xdrn_{u,0}$. Expanding the subtracted sum over $v$, we find:
\begin{equation*}
    \mathcal{A}_{\text{res}} \le \sum_{u \neq \mathbf{0}} \Bigg[ \Bigg( \sum_{v\ge 1} \xact_{u,v} \Bigg) \big( \log \xdrn_{u,0} + 1 \big) - \sum_{v\ge 1} \xact_{u,v} \log \xact_{u,v} \Bigg] = \sum_{u \neq \mathbf{0}} \sum_{v\ge 1} \left[ \xact_{u,v} \big( \log \xdrn_{u,0} + 1 - \log \xact_{u,v} \big) \right]
\end{equation*}

Substituting these bounded entropic quantities back into \cref{eq:act-drn}, the total subtracted scalar concentration of the mixed complexes (the isolated sum in the third line) perfectly absorbs one of the $+1$ constants generated by our inequalities. The expression simplifies into a unified single summation strictly over the mixed complexes:
\begin{equation*}
\begin{aligned}
    g(\xdrn) - g(\xact) &\le \sum_{u \neq \mathbf{0}} \sum_{v\ge 1} \xact_{u,v} \Big[ \log \xdrn_{\mathbf{0},v} + 1 + \log \xdrn_{u,0} + 1 - \log \xact_{u,v} - 1 - \Delta G_{u,v}^{\text{bind}} \Big] \\
    &= \sum_{u \neq \mathbf{0}} \sum_{v\ge 1} \xact_{u,v} \Bigg[ \log \left( \frac{\xdrn_{\mathbf{0},v} \xdrn_{u,0}}{\xact_{u,v}} \right) + 1 - \Delta G_{u,v}^{\text{bind}} \Bigg]
\end{aligned}
\end{equation*}

Because each term in the preceding expression is multiplied by $\xact_{u,v}$, any complex $(u,v)$ with a zero Active-state concentration vanishes and can be excluded. Consequently, instead of summing over all $u \neq \mathbf{0}$, we can restrict the summation to $u \in \mathcal{U}$. Recall that this set is defined as: $\mathcal{U}=\{u \neq \mathbf{0} \mid \exists v \ge 1 \colon \xact_{u,v} > 0\}$.
\begin{equation}\label{ineq:fine-grained}
\begin{aligned}
    g(\xdrn) - g(\xact)
    & \le \sum_{u \in \mathcal{U}} \sum_{v\ge 1} \xact_{u,v} \Bigg[ \log \left( \frac{\xdrn_{\mathbf{0},v} \xdrn_{u,0}}{\xact_{u,v}} \right) + 1 - \Delta G_{u,v}^{\text{bind}} \Bigg]
\end{aligned}
\end{equation}

\begin{remark}\label{rm:maximizing}
    Before applying simplifications and decoupling steps to derive a closed-form tractable bound, evaluating the component-level dependencies of the preceding inequality may be appropriate to generate design strategies and effective amplifiers. 
    One such heuristic is apparent: maximize the concentration of unbound ``input partners'' ($\xact_{u,0}$) for any residual core $u$ where the corresponding mixed complex ($\xact_{u,v}$) is significant, which increases the drained concentration $\xdrn_{u,0}$ and mitigates the negative logarithmic penalty associated with complex formation.
\end{remark}

The formulation in \cref{ineq:fine-grained} casts the free energy transition strictly as an expectation over the mixed complexes. To globally decouple the specific complex concentrations, we apply the Log-Sum Inequality: $\sum a_i \log(b_i/a_i) \le (\sum a_i) \log(\sum b_i / \sum a_i)$.

For the numerator sum $\sum b_i$, we evaluate $\sum_{u \in \mathcal{U}} \sum_{v\ge 1} \xdrn_{\mathbf{0},v} \xdrn_{u,0}$. Because all concentration terms are non-negative, we can create a separable upper bound, $\sum_{v\ge 1} \xdrn_{\mathbf{0},v} \cdot \sum_{u \in \mathcal{U}} \xdrn_{u,0}$. Crucially, because the Drained state is defined as sums over the Active state complexes, summing these over their respective domains exactly partitions them into the pure and mixed Active state pools:
\begin{align*}
    \sum_{v\ge 1} \xdrn_{\mathbf{0},v} =
    \underbrace{\sum_{v\ge 1} \xact_{\mathbf{0},v}}_{X_{\text{inp}}} + \underbrace{\sum_{u \in \mathcal{U}} \sum_{v\ge 1} \xact_{u,v}}_{X_{\mathrm{mix}}} \qquad \qquad \qquad
    \sum_{u \in \mathcal{U}} \xdrn_{u,0} = 
    \underbrace{\sum_{u \in \mathcal{U}} \xact_{u,0}}_{X_{\mathrm{res}}} + \underbrace{\sum_{u \in \mathcal{U}} \sum_{v\ge 1} \xact_{u,v}}_{X_{\mathrm{mix}}}
\end{align*}

Substituting this separated numerator back into the Log-Sum framework, the upper bound resolves to a thermodynamic evaluation dependent solely on the Active state partitioning:
\begin{equation*}
    g(\xdrn) - g(\xact) \le
    X_{\mathrm{mix}} \Bigg[ \log \left( \frac{ (X_{\text{inp}} + X_{\mathrm{mix}}) \cdot (X_{\mathrm{res}} + X_{\mathrm{mix}}) }{ X_{\mathrm{mix}} } \right) + 1 - \delGinpstg \Bigg]
\end{equation*}
To make the expression easier to work with, we observe that $X_{\text{inp}} + X_{\mathrm{mix}} \le \sigma$, so:
\begin{equation*}
    g(\xdrn) - g(\xact) \le
    \underbrace{X_{\mathrm{mix}} \Bigg[ \log \left( \frac{ \sigma \, (X_{\mathrm{res}} + X_{\mathrm{mix}}) }{ X_{\mathrm{mix}} } \right) + 1 - \delGinpstg \Bigg]}_{h(X_{\mathrm{mix}})}
\end{equation*}

To find the global maximum of this bound, we treat $(X_{\mathrm{res}} + X_{\mathrm{mix}})$ as a fixed parameter. Under this condition, $h(X_{\mathrm{mix}})$ is a strictly concave function with respect to $X_{\mathrm{mix}}$. By setting the derivative to zero, the unconstrained maximum of $h$ occurs at $X_{\mathrm{mix}} = \sigma \, (X_{\mathrm{res}} + X_{\mathrm{mix}}) \, e^{-\delGinpstg} = \sigma \, e^{\psi_{\mathrm{inp}}}$. 

Because $X_{\mathrm{mix}}$ is physically constrained by the total input concentration ($X_{\mathrm{mix}} \le \sigma$), we must evaluate the boundary conditions. If $\psi_{\mathrm{inp}} \le 0$, the optimal value lies within the valid physical domain, yielding a maximum of $\sigma e^{\psi_{\mathrm{inp}}}$. If $\psi_{\mathrm{inp}} > 0$, the unconstrained optimal value exceeds $\sigma$; because $h$ is concave, the constrained maximum must occur at the boundary $X_{\mathrm{mix}} = \sigma$, which evaluates to $\sigma(\psi_{\mathrm{inp}} + 1)$. Consequently, \cref{eq:upper_bound_final} is proven:
\begin{equation*}
    g(\xdrn) - g(\xact) \le
    \begin{cases}
        \sigma \, e^{\psi_{\mathrm{inp}}} & \text{if } \psi_{\mathrm{inp}} \le 0 \\
        \sigma \left( \psi_{\mathrm{inp}} +1 \right) & \text{if } \psi_{\mathrm{inp}} > 0
    \end{cases}
    \qedhere
\end{equation*}
\end{appendixproof}

To connect these thermodynamic constraints to system performance, we now translate the bounded free energy transitions into concentrations, yielding an explicit upper bound on the amplification factor.

\subsection{Amplification Factor Upper Bound}
\label{app:input-output-bound}

Fix an input amount $\sigma$.
Intuitively, we want the output amount to be much larger than the leak, which is the output amount in the absence of input.
For actual amplification, we also want the output amount to be much larger than the input amount $\sigma$.
The bound on energy release as a function of amplification capacity reported in \Cref{thm:energy_bound} can be recast into an explicit relationship between these quantities, properly formalized.

Let $O$ denote the designated free-output species, and define its concentration at the Active and Resting states as:
\begin{equation*}
    a := \xact[O], \qquad r := \xrst[O]
\end{equation*}
In other words, $a$ is the signal amount of output when the amplifier is activated, and $r$ is the leak amount.
We define the amplification factor $\beta$ and the output contrast $\gamma$ as:
\begin{equation*}
    \beta := \frac{a-r}{\sigma},
    \qquad
    \gamma := \frac{a}{r}.
\end{equation*}
Thus the amplification factor $\beta$ represents the amount of output increase between leak and signal, relative to the input amount. 
Output contrast $\gamma$ represents the ratio of the free output species between leak and signal states.

\begin{theorem}[Thermodynamic Bound on Amplification]\label{thm:io-contrast-bound}
    Under the natural assumption for amplification that the free-output concentration in the Active state exceeds its Resting leak ($\gamma >1$), the amplification factor $\beta$ is upper-bounded by amplification capacity $\phi$ discounted by the logarithmic output contrast:
    \begin{equation}
        \beta \le \frac{\phi}{\log \gamma}.
    \end{equation}
\end{theorem}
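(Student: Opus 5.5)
The plan is to combine two Bregman-type (relative-entropy) lower bounds, one across each step of the hierarchy of \cref{lem:energy_hierarchy}, and cap their sum with \cref{thm:energy_bound}:
\begin{equation*}
    \big[g(\xdrn) - g(\xrst)\big] + \big[g(\xrst) - g(\xact)\big] = g(\xdrn) - g(\xact) \le \sigma\,\phi .
\end{equation*}
It therefore suffices to show that the left-hand side is at least $(a-r)\log\gamma$. Dividing by $\sigma\log\gamma>0$ then gives $\beta \le \phi/\log\gamma$.

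\textbf{Key identity.} Let $y^\ast$ minimize $g$ subject to $Ax=c$, and assume $y^\ast$ has positive entries on all of $\mathcal{P}$ (as mass action gives). The KKT conditions give $\log y^\ast_{u,v} + G_{u,v} = (A^\top\mu)_{u,v}$ for some multiplier $\mu$. For any feasible $x\ge 0$ with $Ax=c$, substituting $G = A^\top\mu - \log y^\ast$ and using $\mu^\top A x = \mu^\top A y^\ast = \mu^\top c$ makes the linear terms cancel. This yields
\begin{equation*}
    g(x)-g(y^\ast)=\sum_{(u,v)} \Big( x_{u,v}\log\frac{x_{u,v}}{y^\ast_{u,v}} - x_{u,v} + y^\ast_{u,v} \Big).
\end{equation*}
Every summand is nonnegative, so the difference is bounded below by the single summand belonging to the output species $O$.

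\textbf{Second step ($g(\xrst)-g(\xact)$).} Here I apply the identity with $y^\ast=\xact$ (the constrained minimizer for total $x^0(\sigma)$) and $x=\xrst$, which is feasible for the same constraint. The $O$-summand is $r\log(r/a) - r + a = (a-r) - r\log\gamma$.

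\textbf{First step ($g(\xdrn)-g(\xrst)$).} Both states share the input pool $\ic[\xact]$, so the difference is purely between the residual pools. These are $\rc[\xact]$ and the no-input equilibrium, whose $O$-concentration is $r$. Both satisfy $Ax=x^0(0)$, and the latter is the minimizer there, so the identity applies again. Since $O$ is a pure residual complex, its drained concentration is $a' := a + \sum_{v\ge1}\xact_{O,v} \ge a$. The map $t\mapsto t\log(t/r)-t+r$ is increasing for $t\ge r$, and $a > r$ because $\gamma>1$. Hence the $O$-summand is at least $a\log\gamma - a + r$.

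\textbf{Conclusion and main obstacle.} Adding the two bounds, the terms $\pm(a-r)$ cancel and we get $(a-r)\log\gamma \le \sigma\phi$, as required. The main care point is the choice of steps. Using only one step of the hierarchy gives bounds like $r(\gamma\log\gamma-\gamma+1)$ that do not dominate $(a-r)\log\gamma$; summing the two divergences in opposite directions is what produces exactly the symmetric quantity $(a-r)\log\gamma$. The other technical points to verify are the positivity/KKT justification needed for the relative-entropy identity, and the fact that the Resting residual pool really is the $x^0(0)$-constrained minimizer, so that it qualifies as $y^\ast$.
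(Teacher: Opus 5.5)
Your proposal is correct and follows the paper's proof essentially step for step. The paper likewise applies the pseudo-Helmholtz Bregman identity to $g(\xdrn)-g(\xrst)$, using $\xdrn[O]\ge a$ and the monotonicity of $t\mapsto t\log(t/r)-t+r$, and to $g(\xrst)-g(\xact)$; it keeps only the output summand in each, adds them to get $(a-r)\log\gamma$, and caps this with \cref{thm:energy_bound}. The one difference is that you derive the Bregman identity from the KKT conditions, whereas the paper simply cites it. For the no-input step, your KKT derivation should be applied on the pure residual complexes, where the minimizer is positive, rather than on all of $\mathcal{P}$.
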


In the regime relevant for amplification ($\psi_{\mathrm{inp}} > 0$), the amplification capacity $\phi = \psi_{\mathrm{inp}} + 1$ grows linearly with the strength of the input interaction $-\delGinpstg$. A key takeaway of this theorem is therefore that the input must bind the amplifier components with a free energy that scales (at least) linearly with the desired amplification factor (at fixed contrast).

\begin{remark}\label{rm:amp_bound}
    In the positive input potential regime ($\psi_{\mathrm{inp}} > 0$), expanding $\phi$ in \cref{thm:io-contrast-bound} reveals:
    \begin{equation}
        \beta \le \frac{-\delGinpstg + \log (X_{\mathrm{res}} + X_{\mathrm{mix}}) + 1}{ \log \gamma} < \frac{-\delGinpstg + 1}{ \log \gamma }.
    \end{equation}
\end{remark}

\begin{proof}[Proof of \cref{thm:io-contrast-bound}]

The Resting and Drained states share the same isolated input pool, so their free-energy difference is entirely a residual-subsystem difference.  The residual part of $\xrst$ is the equilibrium minimizer subject to the isolated residual mass constraint.  Hence the pseudo-Helmholtz Bregman identity gives \cite{csiszar1975divergence}:
\begin{equation*}\label{eq:io-bregman-identity}
    g(\xdrn)-g(\xrst)
    =
    \sum_i
    \left(
        \xdrn[i]\log\frac{\xdrn[i]}{\xrst[i]}
        -\xdrn[i]+\xrst[i]
    \right),
\end{equation*}
where the sum is over the pure residual complexes.  Each summand is nonnegative. 
Keeping only the output $O$:
\begin{equation*}\label{eq:io-drained-resting-output-cost}
    g(\xdrn)-g(\xrst) \ge
    \xdrn[O] \log \frac{\xdrn[O]}{\xrst[O]} - \xdrn[O] + \xrst[O].
\end{equation*}

We know the output $O$ is a pure residual species and that draining the input from the Active state may increase its concentration $\xdrn[O] \ge \xact[O]=a$.
This holds because a fraction of the total output mass might be bound with input species in the Active state. 
Letting $r=\xrst[O]$, define
\[
    F(t):=t\log\frac{t}{r}-t+r.
\]
Since $F'(t)=\log(t/r)$, the function $F$ is increasing for $t\ge r$.  Applying $\xdrn[O] \ge a$ along with $a>r$ yields:
\begin{equation}\label{eq:io-drained-resting-coordinate-cost}
    g(\xdrn)-g(\xrst)  \ge 
    a\log\frac{a}{r}-a+r.
\end{equation}

A sharper estimate is obtained by also retaining the output contribution between the Resting and Active states.  Since the Active state is the equilibrium minimizer for the full system under the same total mass constraint, the pseudo-Helmholtz Bregman identity similarly gives:
\begin{equation}\label{eq:io-resting-active-coordinate-cost}
    g(\xrst)-g(\xact)
    \ge
    r\log\frac{r}{a}-r+a.
\end{equation}

Adding \cref{eq:io-drained-resting-coordinate-cost,eq:io-resting-active-coordinate-cost} gives:
\begin{equation*}\label{eq:io-symmetric-coordinate-cost}
\begin{aligned}
    g(\xdrn)-g(\xact)
    &\ge
    \left(a\log\frac{a}{r}-a+r\right)
    +
    \left(r\log\frac{r}{a}-r+a\right)  \\
    &=
    (a-r)\log\frac{a}{r}.
\end{aligned}
\end{equation*}

Substituting $\gamma=a/r$ and $a-r=\beta\sigma$ gives:
\begin{equation*}
    g(\xdrn)-g(\xact)
    \ge
    \beta\sigma\log\gamma.
\end{equation*}

\begin{remark}\label{rm:minimize}
    One of the reasons \Cref{thm:io-contrast-bound} is not strictly tight is the truncation of terms in \cref{eq:io-drained-resting-coordinate-cost,eq:io-resting-active-coordinate-cost}.
    Nonetheless, looking at the entire sum may be appropriate to generate design strategies and effective amplifiers. 
    One such heuristic is apparent: limit the variety of distinct chemical species that do not contain the output, which decreases the number of non-zero terms in the sum.
\end{remark}

Combining this lower bound with \cref{thm:energy_bound},
\[
    g(\xdrn)-g(\xact)\le \sigma\phi,
\]
and cancelling $\sigma$ completes the proof.
\end{proof}

To illustrate the order of magnitude of these thermodynamic limits, we evaluate the upper bound using the parameters of the isometric trimeric amplifier from \cref{sec:isometric}. 
Our input strand has a maximum binding affinity of roughly $-49~\text{kcal/mol}$ at the operating temperature of $4^\circ\text{C}$ ($277.15~\text{K}$). 
Assuming an overestimate of $X_\mathrm{mix} + X_\mathrm{res} < 1~\mu\text{M}$,
we obtain the following dimensionless parameter values\footnote{
Using the molarity of water ($55.49~\text{M}$ at $4^\circ\text{C}$) as the solvent to non-dimensionalize the system, and adopting the ideal gas constant $R = 0.00198717~\text{kcal/mol}.\text{K}$.}:
\begin{equation*}
    \delGinpstg \approx -89, \quad X_{\mathrm{res}} + X_{\mathrm{mix}} \lesssim 1.8 \times 10^{-8}.
\end{equation*}
Substituting these values into the definition for the amplification capacity yields:
\begin{equation*}
    \phi = \psi_{\mathrm{inp}} + 1 = -\delGinpstg + \log (X_{\mathrm{res}} + X_{\mathrm{mix}} ) + 1 \approx 72.
\end{equation*}
Under these conditions, requiring a modest $10$-fold target output contrast ($\gamma = 10$) restricts the maximum achievable amplification factor to $\beta \lesssim 31$. This concrete example demonstrates how a seemingly vast input free-energy budget (hybridization energy of $-49~\text{kcal/mol}$ is typically considered all but irreversible) is naturally attenuated by the system's concentration scales and signal contrast requirements, leaving a ceiling on the final equilibrium gain.\footnote{As discussed in \cref{sec:conclusion}, the tightness of our results remains open. While we have not demonstrated multi-stage composition and the bound need not be tight, it leaves open the possibility that $4$ layers of the isometric trimeric amplifier could be composed, since each contributes a factor of $2$ and $2^4 = 16 \le 31 < 32 = 2^5$.}

\subsection{Output Contrast Upper Bound}

In the previous section, \Cref{thm:io-contrast-bound} bounds the amplification factor $\beta$ as a function of amplification capacity $\phi$ and the contrast $\gamma$. 
In this section, our results address the following intuition in turn: to have high contrast, the output must be capable of binding strongly with something in order to have low leak. 
Thus, we turn from the free energy of binding the input to that of binding the output.

In order to link contrast to the underlying thermodynamics of output sequestration in the Resting state, 
suppose a set of pure residual complexes $\mathcal{Q}$ sequesters the output species in the Resting state. Each complex $Q \in \mathcal{Q}$ forms via the reversible binding of one free output molecule $O$ to a distinct residual partner $P_Q$\footnote{If $Q$ includes more than one copy of $O$, then $P_Q$ includes $O$ as well, which does not cause a problem for the subsequent reasoning steps.}:
\begin{equation*}
    O + P_Q \rightleftharpoons Q
\end{equation*}
We define the sequestration free-energy gain for each complex as $\Delta G^{\mathrm{out}}_Q := G_Q - G_O - G_{P_Q}$.
Let $\delGoutstg := \min_{Q\in\mathcal{Q}} \Delta G_Q^{\mathrm{out}}$ denote the most favorable binding energy achievable within the network upon the output sequestration.
Note that similar to $\delGinpstg$, we use the convention that $\delGoutstg < 0$, and the more negative it is, the more favorable the interaction.

By the law of mass action, the Resting equilibrium configuration satisfies:
\begin{equation}\label{eq:io-output-mass-action}
    \xrst[Q] = \xrst[O]\,\xrst[P_Q]\,e^{-\Delta G^{\mathrm{out}}_Q}
\end{equation}
Let $Z_{\mathrm{res}}^{\mathrm{out}} := \sum_{Q\in\mathcal{Q}}\xrst[P_Q]$. This quantity serves as the output-side counterpart to $X_{\mathrm{res}}$: whereas $X_{\mathrm{res}}$ represents the total concentration of complexes capable of binding the input species in the Active state, $Z_{\mathrm{res}}^{\mathrm{out}}$ represents the total concentration of complexes $P_Q$ capable of sequestering the output species in the Resting state.
Furthermore, let $\nu_{\max}$ be the largest output stoichiometry of any complex $Q$.

Finally, the output potential $\psi_{\mathrm{out}}$ is defined as follows:
\begin{equation*}
    \psi_{\mathrm{out}} := \psi(\nu_{\max} \, Z_{\mathrm{res}}^{\mathrm{out}} , \delGoutstg)
    = - \delGoutstg + \log (\nu_{\max} \, Z_{\mathrm{res}}^{\mathrm{out}}).
\end{equation*}

\begin{lemma}[Contrast Upper Bound]\label{lemma:contrast_upperbound}
    The output contrast $\gamma$ of an amplifier is bounded by:
    $\gamma \le 1+ e^{\psi_{\mathrm{out}}}$.
\end{lemma}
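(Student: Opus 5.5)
The plan is to bound the leak $r=\xrst[O]$ from below and the signal $a=\xact[O]$ from above by the total output mass. Let $C_O$ denote the total concentration of output monomer, which is conserved and is the same in the Resting and Active states, since adding input does not change it. In the Active state the free output cannot exceed the total output mass, so $a \le C_O$.

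In the Resting state the output mass splits between the free output and the sequestering complexes. This gives the mass-conservation inequality
\begin{equation*}
    C_O \le r + \sum_{Q\in\mathcal{Q}} \nu_Q\, \xrst[Q] + (\text{output in other residual complexes}),
\end{equation*}
where $\nu_Q \le \nu_{\max}$ is the output stoichiometry of $Q$. I will take $\mathcal{Q}$ to be the full set of residual complexes containing output in the Resting state, so the last term is absorbed. Then I substitute the mass-action relation \cref{eq:io-output-mass-action}, $\xrst[Q]=r\,\xrst[P_Q]\,e^{-\Delta G^{\mathrm{out}}_Q}$, and bound each factor $e^{-\Delta G^{\mathrm{out}}_Q}\le e^{-\delGoutstg}$ using the definition of $\delGoutstg$ as the minimum. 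This yields
\begin{equation*}
    C_O \le r\Big(1+\nu_{\max}\, e^{-\delGoutstg}\sum_{Q}\xrst[P_Q]\Big) = r\big(1+\nu_{\max} Z_{\mathrm{res}}^{\mathrm{out}} e^{-\delGoutstg}\big) = r\,(1+e^{\psi_{\mathrm{out}}}).
\end{equation*}

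Combining the two estimates gives $\gamma = a/r \le C_O/r \le 1+e^{\psi_{\mathrm{out}}}$, which is the claim.

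The main obstacle is the bookkeeping in the mass-conservation step. A complex $Q$ containing $\nu_Q$ copies of $O$ sequesters $\nu_Q$ units of output, so the factor $\nu_{\max}$ must be tracked. The footnote's case, where $P_Q$ itself contains $O$, needs care: the mass-action relation still has exactly one explicit factor of $r$, with the remaining output copies hidden inside $\xrst[P_Q]$. So each $Q$ contributes at most $\nu_{\max}\, r\, \xrst[P_Q]\, e^{-\delGoutstg}$, as required. I must also check that the hypothesis, namely that $\mathcal{Q}$ sequesters the output, covers all non-free output mass at rest. If it does not, I would state the lemma with $\mathcal{Q}$ enlarged to all such complexes, each viewed as $O$ plus a partner $P_Q$.
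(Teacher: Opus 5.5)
Your proposal is correct and follows essentially the same route as the paper. The paper asserts directly that $a-r$ cannot exceed the output sequestered in $\mathcal{Q}$ at rest, then substitutes the mass-action relation and bounds by $\nu_{\max}$ and $e^{-\delGoutstg}$ exactly as you do. Your step $a \le C_O = r + \sum_{Q} \nu_Q\,\xrst[Q]$ simply spells out that first assertion, and your caveat that $\mathcal{Q}$ must account for all non-free output in the Resting state is the assumption the paper makes implicitly when it says $\mathcal{Q}$ ``sequesters the output species.''
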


\begin{proof}
    The net increase in the free output concentration, $a - r$, cannot exceed the total amount of sequestered output available in the Resting state. Recalling that the Resting free output concentration is $\xrst[O] = r$, we apply \cref{eq:io-output-mass-action} to bound this difference:
    \begin{equation*}
        a - r \le \nu_{\max} \sum_{Q \in \mathcal{Q}} \xrst[Q] 
        = \nu_{\max} \, r \sum_{Q \in \mathcal{Q}} \xrst[P_Q] \, e^{-\Delta G^{\mathrm{out}}_Q} 
        \le r \left( \nu_{\max} \, Z_{\mathrm{res}}^{\mathrm{out}} \, e^{-\delGoutstg} \right).
    \end{equation*}
    Recognizing that the term in the parentheses is exactly equivalent to $e^{\psi_{\mathrm{out}}}$, we obtain $a - r \le r e^{\psi_{\mathrm{out}}}$. Dividing the entire inequality by $r$ and rearranging for $\gamma := a/r$ completes the proof.
\end{proof}

\Cref{lemma:contrast_upperbound} together with \cref{thm:io-contrast-bound} reveals the following big picture. 
Because the output contrast $\gamma$ acts as the denominator in the overall amplification bound (\cref{thm:io-contrast-bound}), enforcing a high signal contrast inherently restricts the maximum achievable amplification factor $\beta$. Conversely, designing a system with weaker output sequestration (a less negative $\delGoutstg$) mathematically expands the potential for amplification, but leaves the system susceptible to leak and poor contrast. Therefore, to increase the amplification upper bound \emph{without} sacrificing signal clarity, one must instead drive the system with a more favorable input free energy $\delGinpstg$, thereby raising the overall amplification capacity $\phi$.

\begin{remark}\label{cor:sequest_aff}
    \Cref{lemma:contrast_upperbound} implies that 
    to achieve a target output contrast $\gamma > 1$, the output sequestration free-energy must be favorable enough (recall $\delGoutstg < 0$ by convention):
    \begin{equation}\label{corr:affinity_bound}
        \delGoutstg \le \log \left( \frac{\nu_{\max} \, Z_{\mathrm{res}}^{\mathrm{out}}}{\gamma - 1} \right).
    \end{equation}
\end{remark}

In practice, bounding the leakage is critical to avoid prematurely triggering downstream cascades or violating diagnostic sensitivity thresholds. To guarantee a target contrast $\gamma$ for these applications, \cref{corr:affinity_bound} provides a direct design heuristic: the intrinsic sequestration affinity $\delGoutstg$ must be driven sufficiently negative, taking into account the total concentration of available residual partners $Z_{\mathrm{res}}^{\mathrm{out}}$ and their maximum output sequestration size $\nu_{\max}$.

To complete our analysis of the isometric amplifier example, we examine the thermodynamic constraints on the output side of the network. 
In our experiment, the hybridization energy of the output O is $-24~\text{kcal/mol}$ under our conditions. 
While this is substantially less favorable than the hybridization of the input, 
it is in principle compatible with very large contrast depending on $Z_{\mathrm{res}}^{\mathrm{out}}$.
Although $Z_{\mathrm{res}}^{\mathrm{out}}$ is not directly under our control, we could imagine that adding a reasonable excess of strand P results in nanomolar quantities of free P in equilibrium. 
Then the corresponding dimensionless parameters are evaluated as:
\begin{equation*}
    \delGoutstg \approx -44, \quad \nu_{\max} = 2, \quad \text{and} \quad Z_{\mathrm{res}}^{\mathrm{out}} \lesssim 1.8 \times 10^{-11}.
\end{equation*}
Substituting these values into the expression for the output amplification potential yields:
\begin{equation*}
    \psi_{\mathrm{out}} = - \delGoutstg + \log (\nu_{\max} \, Z_{\mathrm{res}}^{\mathrm{out}}) \approx 19.
\end{equation*}
By \cref{lemma:contrast_upperbound}, this potential yields a bound on the output contrast of $\gamma \lesssim e^{19}$.
While operating near this extreme contrast limit would be highly effective for minimizing leak, it heavily penalizes the system's amplification $\beta$ by \cref{thm:io-contrast-bound}. 
Alongside an amplification capacity of $\phi \approx 72$ from the previous section, we see that the maximum achievable amplification factor becomes constrained to:
\begin{equation*}
    \beta \le \frac{\phi}{\log \gamma } \approx \frac{72}{19} \approx 3.8.
\end{equation*}

\input{main-appendix}

\section{Conclusion}
\label{sec:conclusion}

In this work, we first provided a rigorous proof demonstrating that equilibrium dimerization networks are fundamentally incapable of signal amplification, a problem previously addressed only through numerical approximations due to its inherent algebraic complexity. 
Our results help explain the lack of signal amplification in undercomplementary designs proposed by Nikitin~\cite{nikitin2023noncomplementary} as they all rely on a maximum complex size of two.
We subsequently evaluated the amplification potential of equilibrium trimerization networks.
Our investigation of an entropy-driven trimerization network confirmed its capacity to amplify, but revealed a size-shrinking feature that inherently limits its amplification power.
To resolve this, we designed an isometric trimer-based equilibrium amplifier that successfully achieved signal amplification while strictly preserving the length of the output strands, potentially allowing for module composition.
We validated this architecture through the wet-lab implementation of a $2\times$ amplifier, achieving an experimental amplification factor that closely matches this theoretical target.

We finally broadened our scope to investigate fundamental limits on equilibrium amplification in a more general case based on thermodynamic constraints, regardless of the multimericity of the network. 
We demonstrated that the free energy released by the input binding determines the theoretical ceiling of the amplification power; the network design then dictates how much of this capacity is attenuated. 
Specifically, we established that enforcing a high signal contrast fundamentally limits the maximum achievable amplification factor. 
Furthermore, we proved that the theoretical ceiling of this output contrast is itself dictated by the free energy of output sequestration. 
Ultimately, these thermodynamic bounds reveal a universal engineering trade-off: securing a robust, high-contrast signal requires deep output sequestration to suppress leak, which in turn demands a proportionally larger input free-energy budget to successfully drive molecular amplification.

There are two natural directions for future work. First, we aim to experimentally compose multiple amplifier modules to achieve greater than a factor-of-two amplification, using our component-level heuristics (e.g., \cref{rm:maximizing,rm:minimize}) to guide the physical design. Second, we must investigate the tightness of our theorems by testing how closely these physical implementations can approach our established thermodynamic limits for amplification (\cref{thm:io-contrast-bound}) and contrast (\cref{lemma:contrast_upperbound}). 
Ultimately, this leads to a fundamental open question: subject to these thermodynamic limits, does an ``optimal'' equilibrium amplifier or amplifier family exist?

\begingroup
\nolinenumbers

\endgroup

\bibliography{main}



\end{document}

%% file: macro.tex
\DeclareMathAlphabet{\mathpzc}{OT1}{pzc}{m}{it}

\newcommand{\xact}[1][]{%
  \if\relax\detokenize{#1}\relax
    \mathbf{x}%
  \else
    \mathbf{x}_{#1}%
  \fi
}

\newcommand{\xdrn}[1][]{%
  \if\relax\detokenize{#1}\relax
    \mathbf{y}%
  \else
    \mathbf{y}_{#1}%
  \fi
}

\newcommand{\xrst}[1][]{%
  \if\relax\detokenize{#1}\relax
    \mathbf{z}
  \else
    \mathbf{z}_{#1}%
  \fi
}

\newcommand{\xmed}[1][]{%
  \if\relax\detokenize{#1}\relax
    \mathbf{x}_{\text{med}}
  \else
    x_{\text{med},#1}%
  \fi
}

\ifnum\draft=1
\newcommand{\DS}[1]{\textcolor{red}{[David: #1]}}

\newcommand{\HA}[1]{\textcolor{blue}{[Hamid: #1]}}
\newcommand{\todoi}[1]{\todo[inline]{#1}}

\else
\newcommand{\DS}[1]{}

\newcommand{\HA}[1]{}
\renewcommand{\todo}[1]{}
\newcommand{\todoi}[1]{}
\fi

\newcommand{\poly}[1][2]{%
  \if\relax\detokenize{#1}\relax
    \mathcal{P}_
  \else
    \mathbf{y}_{#1}%
  \fi
}

\newcommand{\delGoutstg}{%
    \Delta G_{\mathrm{stg}}^{\mathrm{out}}%
}

\newcommand{\delGinpstg}{%
    \Delta G_{\mathrm{stg}}^{\mathrm{inp}}%
}

\newcommand{\ic}[1][]{%
  \if\relax\detokenize{#1}\relax
    \text{IC}
  \else
    \text{IC}(#1)%
  \fi
}

\newcommand{\rc}[1][]{%
  \if\relax\detokenize{#1}\relax
    \text{RC}
  \else
    \text{RC}(#1)%
  \fi
}

%% file: main-appendix.tex

\begin{toappendix}
    
\section{Materials and Methods}\label{sec:methods}

\subsection{Sequence Design and NUPACK}

The sequence design for the amplifier was computationally optimized to enforce the strict thermodynamic partial order of binding affinities required to drive the cascade: $K_{\text{I:A:X}} > K_{\text{W:X}} > K_{\text{W:P}} > K_{\text{O:O:P}}$. While the physical functionality of the network is ultimately dictated by actual thermodynamic free energies rather than raw mismatch counts, we employed targeted mismatch quotas as a generative heuristic to establish this hierarchy. Specifically, the algorithm assigned 0, 3, 6, and 9 mismatches to the respective complexes, spacing them in increments of three to ensure distinct energetic separation (the specific sequences and spatial positions of these mismatches are detailed in \cref{tab:mismatch_sequences} and illustrated in \cref{fig:mismatch_positions}).

\begin{table}[htbp]
    \centering
    \caption{DNA sequences for the designed isometric trimer-based amplifier.}
    \label{tab:mismatch_sequences}
    \begin{tabular}{cl}
        \toprule
        \textbf{Strand Name} & \textbf{Strand Sequence (5' to 3')} \\
        \midrule
        I & \texttt{CGATGTAGGAAAGGGATTACGTTTGAAGAT} \\
        A & \texttt{CGAAGAATGCGAAGACATGCATTGCAACAT} \\
        X & \texttt{ATCTTCAAACGTAATCCCTTTCCTACATCGATGTTGCAATGCATGTCTTCGCATTCTTCG} \\
        W & \texttt{CGAAGTATGCGAAGACATGCATTGCAACATCGAAGTAGGAAAGGGATTTCGTTTGAAGAT} \\
        P & \texttt{ATCTTCAAACGAAATCCTTTTCTTACTTCGATCTTGCAATGCATTACTTTGCATACTTCG} \\
        O & \texttt{CGAAGTATGTAAAGTAATGCGTTTGAAGAT/3Rox\_N/} \\
        \bottomrule
    \end{tabular}
\end{table}

\begin{figure}[bhtp]
    \centering 
    \includegraphics[width=1\textwidth]{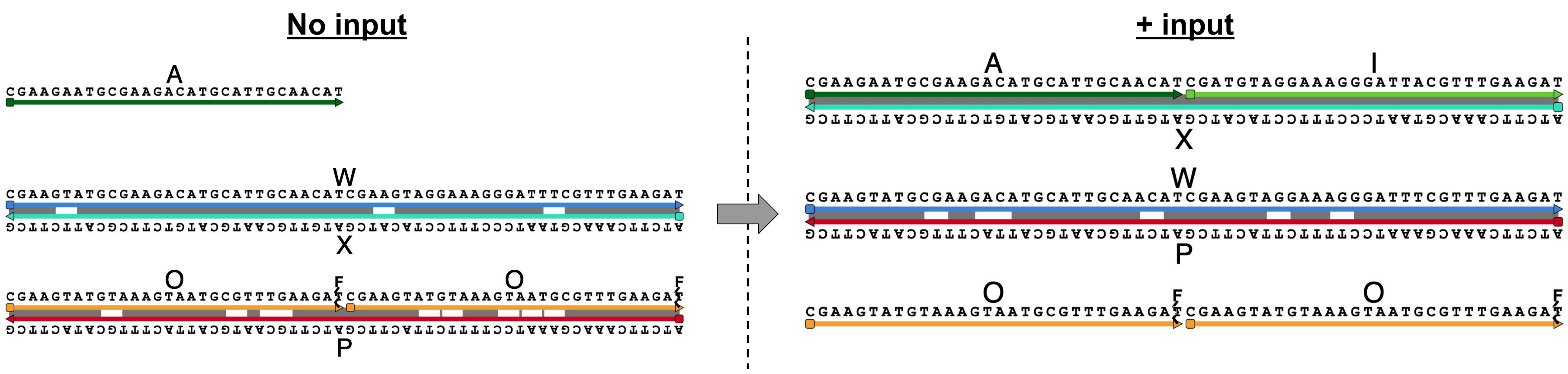} 
    \caption{Sequence-level architecture of the proposed isometric trimerization-based amplifier (illustrated with {\tt scadnano}). An increased density of white gaps within the gray binding domains denotes a higher frequency of engineered nucleotide mismatches used to thermodynamically tune the partial order of binding affinities.} 
    \label{fig:mismatch_positions} 
\end{figure}

Thermodynamic simulations were performed using NUPACK (DNA parameter set, stacking ensemble) at 0.1 M Na$^+$ and 0.0 M Mg$^{++}$. A maximum complex size of three strands was specified to directly reflect the designed trimerization mechanism. To screen for potential higher-order crosstalk, the maximum complex size was subsequently increased to five; this yielded virtually identical results, confirming the absence of significant unintended associations within the pseudoknot-free structural regime modeled by NUPACK.

\subsection{DNA preparation and quantification}

All DNA oligonucleotides were obtained from a commercial supplier and resuspended in TE buffer (pH 8.0). DNA concentrations were determined by measuring absorbance at 260 nm (A260) using a NanoDrop spectrophotometer, and stock solutions were diluted to the desired working concentrations using the same buffer.

For amplifier experiments, the output strand O concentration was fixed at 300 nM, while other circuit strands were maintained at 150 nM unless otherwise specified. The input strand I concentration was varied from 0 to 200 nM to evaluate the input–output response of the amplifier. In addition, calibration lanes containing known concentrations of the output strand were included in the gel analysis to enable estimation of output concentrations in experimental samples.

\subsection{Annealing procedure}

DNA strand mixtures were assembled in annealing buffer and subjected to a slow annealing protocol designed to bring the system close to thermodynamic equilibrium. The annealing buffer consisted of 40 mM Trizma base, 20 mM acetic acid (without EDTA), 100 mM Na$^+$, 5\% formamide, and 5\% DMSO. Samples were first heated to 95°C for 5 minutes to denature all strands, followed by gradual cooling at 0.1°C/s to 4°C. The samples were then held at 4°C overnight to allow equilibration and formation of the intended DNA complexes prior to analysis.

\subsection{Native polyacrylamide gel electrophoresis}

Reaction products were analyzed using native polyacrylamide gel electrophoresis (PAGE). Gels were prepared as 12\% polyacrylamide gels using a mixture containing 40\% acrylamide/bis-acrylamide solution, $1\times$ TBE buffer, and Milli-Q water, with polymerization initiated by 10\% ammonium persulfate (APS) and TEMED.

Electrophoresis was performed in $1\times$ TBE running buffer (89 mM Tris, 89 mM boric acid, 2 mM EDTA). Samples were mixed with Tritrack native gel loading dye prior to loading. Gels were run at 100 V for 60 minutes at 4°C under non-denaturing conditions, allowing DNA complexes to migrate based on size and conformation while preserving strand hybridization interactions.

\subsection{Gel imaging, calibration curve construction, and quantification}

Gel images were captured using a gel documentation system and quantified using GeneTools. Strand O was labeled with a 3$'$ ROX fluorophore using NHS ester chemistry. Fluorescence imaging was performed using a Syngene G:BOX gel documentation system (Syngene G:BOX). ROX fluorescence was detected using excitation/emission wavelengths of 575/602 nm.

\begin{figure}[!t]
    \centering 
    \includegraphics[width=1\textwidth]{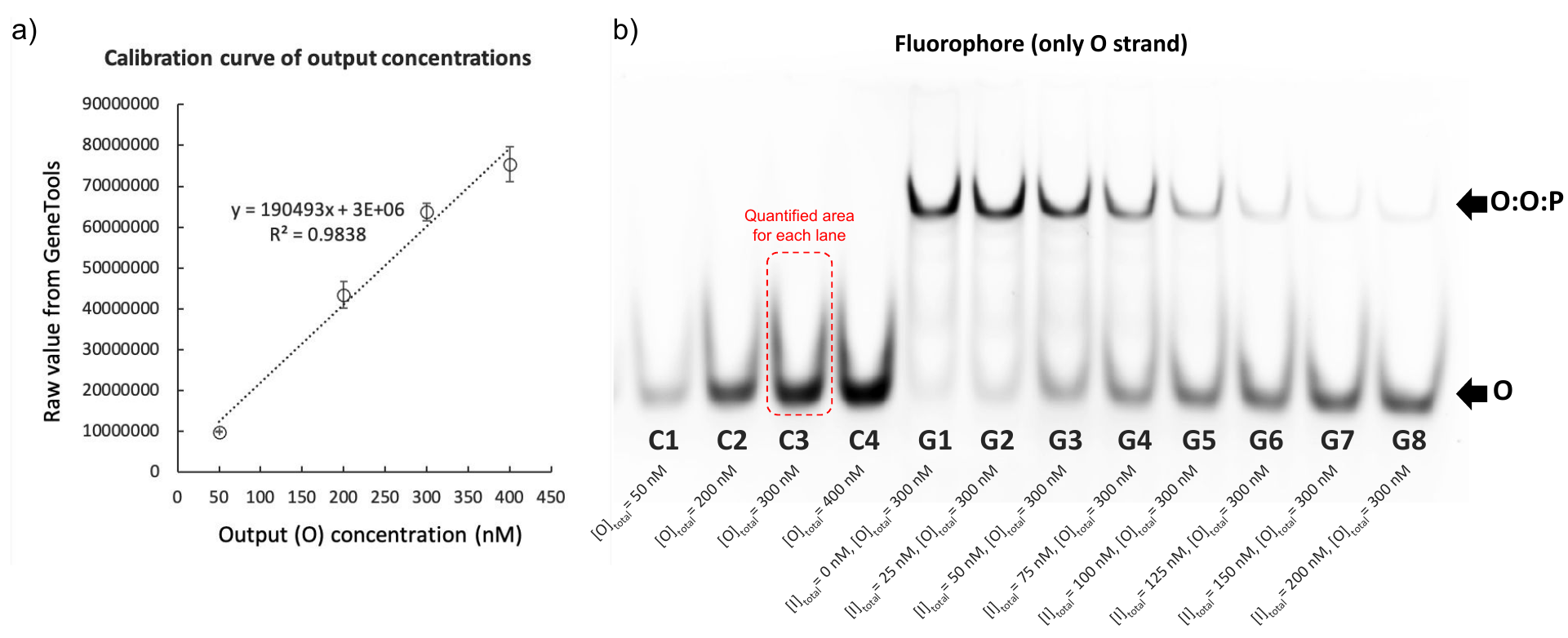} 
    \caption{Quantification of fluorophore-labeled amplifier output using native PAGE. 
    \textbf{(a)} Linear regression of the calibration data. 
    \textbf{(b)} A 12\% native polyacrylamide gel used to quantify the released output. Two distinct bands are observed: the upper band corresponds to the sequestered O:O:P complex, and the lower band corresponds to the free output strand O. The first four lanes (C1--C4) contain known concentrations of the output strand (50, 200, 300, and 400 nM) to establish the calibration curve. Lanes G1--G8 contain amplifier reactions with a fixed total output concentration (300 nM) and increasing input concentrations (0--200 nM). The red dashed box indicates a representative region selected for densitometry using GeneTools; identical region dimensions were applied across all lanes to ensure consistent intensity extraction.} 
    \label{fig:gel} 
\end{figure}

As shown in \cref{fig:gel}b, the native PAGE gel displays the migration of the free output strand O and the O:O:P complex. The first four lanes contain known concentrations of the output strand O (50, 200, 300, and 400 nM) and serve as calibration data. These lanes were used to construct a calibration curve relating band intensity to output concentration. The remaining eight lanes correspond to amplifier reactions in which different concentrations of input strand I were added, resulting in varying levels of O and O:O:P complexes.

For quantification, the band corresponding to the free output strand O in each lane was selected using a fixed rectangular region, and the integrated band area was calculated following local background subtraction. The same region size was applied to all lanes to ensure consistent quantification across samples. For clarity, an example of the quantified region is indicated by the red dashed box in \cref{fig:gel}b, illustrating the region used to extract the GeneTools intensity values for each lane.

To convert GeneTools intensity values into output concentrations, a calibration curve was constructed using the positive control lanes.
The GeneTools integrated band areas from the known output concentrations were plotted against their corresponding concentrations and fitted using a linear regression model to generate the calibration relationship between gel band intensity and output concentration (\cref{fig:gel}a). 
To build this model, the linear regression was performed using all individual replicate data points of the known concentrations rather than their averages. Each experimental condition was measured in three independent replicates ($N = 3$), and the final calculated output concentrations were reported as the mean $\pm$ standard deviation.

\end{toappendix}